\def\limplies{\supset}
\def\DML{\underline{\bf DML}}
\def\dmllang{\mbox{\underline{DML}}}
\def\ATS{\underline{\bf ATS}}
\def\PTS{\underline{\bf PTS}}
\def\atslang{\mbox{\underline{ATS}}}
\def\lamdyn{\lambda_{\it dyn}}
\def\ATSzero{\mbox{ATS}_0}
\def\ATSproof{\mbox{ATS}_{\it pf}}
\title
[Journal of Functional Programming]
{Applied Type System: An Approach to\break Practical Programming with Theorem-Proving}
\author
[Hongwei Xi]
{Hongwei Xi
\thanks{Supported in part by NSF grants no. CCR-0224244, no. CCR-0229480, and no. CCF-0702665}\\
Boston University, Boston, MA 02215, USA\\
\email{hwxi@cs.bu.edu}
}
\begin{document}

\label{firstpage}

\maketitle

\begin%
{abstract}
The framework Pure Type System ($\PTS$) offers a simple and general
approach to designing and formalizing type systems. However, in the
presence of dependent types, there often exist certain acute problems
that make it difficult for $\PTS$ to directly accommodate many common
realistic programming features such as general recursion, recursive
types, effects (e.g., exceptions, references, input/output), etc. In
this paper, Applied Type System ($\ATS$) is presented as a framework
for designing and formalizing type systems in support of practical
programming with advanced types (including dependent types). In
particular, it is demonstrated that $\ATS$ can readily accommodate a
paradigm referred to as programming with theorem-proving (PwTP) in
which programs and proofs are constructed in a syntactically
intertwined manner, yielding a practical approach to internalizing
constraint-solving needed during type-checking. The key salient
feature of $\ATS$ lies in a complete separation between statics, where
types are formed and reasoned about, and dynamics, where programs are
constructed and evaluated. With this separation, it is no longer
possible for a program to occur in a type as is otherwise allowed in
$\PTS$.  The paper contains not only a formal development of $\ATS$
but also some examples taken from $\atslang$, a programming language
with a type system rooted in $\ATS$, in support of employing $\ATS$ as a
framework to formulate advanced type systems for practical programming.
\end{abstract}

\tableofcontents

\section
{Introduction}
A primary motivation for developing Applied Type System ($\ATS$) stems
from an earlier attempt to support a restricted form of dependent types in
practical programming~\cite{DML-jfp07}.
While there is already a framework Pure Type System
($\PTS$)~\cite{PTS-lcwt92} that offers a simple and general approach to
designing and formalizing type systems, it is well understood that there
often exist some acute problems (in the presence of dependent types) making
it difficult for $\PTS$ to accommodate many common realistic programming
features. In particular, various studies reported in the literature
indicate that great efforts are often required in order to maintain a style
of pure reasoning as is advocated in $\PTS$ when features such as general
recursion~\cite{CONSTABLE87}, recursive types~\cite{MENDLER87},
effects~\cite{HMST95}, exceptions~\cite{HN88} and input/output are present.

The framework $\ATS$ is formulated to allow for designing and
formalizing type systems that can readily support common realistic
programming features. The formulation of $\ATS$ given in this paper is
primarily based on the work reported in two previous
papers~\cite{ATS-types03,CPwTP-icfp05} but there are some fundamental
changes in terms of the handling of proofs and proof construction.  In
particular, the requirement is dropped that a proof in $\ATS$ must be
represented as a normalizing lambda-term~\cite{ATSLF-08}.

In contrast to $\PTS$, the key salient feature of $\ATS$ lies in a
complete separation between statics, where types are formed and
reasoned about, from dynamics, where programs are constructed and
evaluated. This separation, with its origin in a previous study on a
restricted form of dependent types developed in Dependent ML
($\DML$)~\cite{DML-jfp07}, makes it straightforward to support
dependent types in the presence of effects such as references and
exceptions. Also, with the introduction of two new (and thus somewhat
unfamiliar) forms of types: {\em guarded types} and {\em asserting
  types}, $\ATS$ is able to capture program invariants in a manner
that is similar to the use of pre-conditions and
post-conditions~\cite{HOARE69}. By now, studies have shown amply and
convincingly that a variety of traditional programming paradigms
(e.g., functional programming, object-oriented programming,
meta-programming, modular programming) can be directly supported in
$\ATS$ without relying on {\em ad hoc} extensions, attesting to the
expressiveness of $\ATS$.  In this paper, the primary focus of study
is set on a novel programming paradigm referred to as {\em programming
  with theorem-proving} (PwTP) and its support in $\ATS$. In
particular, a type-theoretical foundation for PwTP is to be formally
established and its correctness proven.

The notion of type equality plays a pivotal r{\^o}le in type system
design. However, the importance of this r{\^o}le is often less evident
in commonly studied type systems. For instance, in the simply typed
$\lambda$-calculus, two types are considered equal if and only if they
are syntactically the same; in the second-order polymorphic
$\lambda$-calculus ($\lambda_2$)~\cite{REYNOLDS72A} and System
F~\cite{GIRARD86}, two types are considered equal if and only if they
are $\alpha$-equivalent; in the higher-order polymorphic
$\lambda$-calculus ($\lambda_\omega$), two types are considered equal
if and only if they are $\beta\eta$-equivalent.  This situation
immediately changes in $\ATS$, and let us see a simple example that
stresses this point.

\def\timp{\rightarrow}
\def\cnil{\mbox{\tt nil}}
\def\ccons{\mbox{\tt cons}}
\def\sint{\mbox{\it int}}
\def\snat{\mbox{\it nat}}
\def\sbool{\mbox{\it bool}}
\def\sprop{\mbox{\it prop}}
\def\stype{\mbox{\it type}}
\def\tint{\mbox{\bf int}}
\def\tlist{\mbox{\bf list}}
\begin{figure}[h]
\begin{verbatim}
fun
append
{a:type}{m,n:nat}
(
  xs: list (a, m), ys: list (a, n)
) : list (a, m+n) =
  case xs of
  | nil() => ys (* the first clause *)
  | cons(x, xs) => cons (x, append(xs, ys)) (* the second clause *)
// end of [append]
\end{verbatim}
\caption{List-append in $\atslang$}
\label{figure:list_append_function}
\end{figure}
In Figure~\ref{figure:list_append_function}, the presented code
implements a function in $\atslang$~\cite{ats-lang}, which is a
substantial system such that its compiler alone currently consists of
more than 165K lines of code implemented in $\atslang$
itself.\footnote{Please see {\tt http://www.ats-lang.org} for more details.}
The concrete syntax used in the implementation should be accessible to
those who are familiar with Standard ML (SML)~\cite{SML97}). Note that
$\atslang$ is a programming language equipped with a type system
rooted in $\ATS$, and the name of $\atslang$ derives from that of
$\ATS$. The type constructor $\tlist$ takes two arguments; when
applied to a type $T$ and an integer $I$, $\tlist(T,I)$ forms a type
for lists of length $I$ in which each element is of type $T$.  Also,
the two list constructors $\cnil$ and $\ccons$ are assigned the
following types:
\begin{center}
\[%
\begin%
{array}{ccl}
\cnil &~:~& \forall a:\stype.~()\timp\tlist (a, 0) \\
\ccons &~:~& \forall a:\stype.\forall n:\snat.~(a, \tlist (a, n))\timp\tlist(a, n+1) \\
\end{array}\]
\end{center}
So $\cnil$ constructs a list of length $0$, and $\ccons$ takes an element
and a list of length $n$ to form a list of length $n+1$.
\def\feval{\mbox{\tt eval}}
\def\fappend{\mbox{\tt append}}
The header of the function $\fappend$ indicates that $\fappend$ is assigned
the following type:
\begin%
{center}
\[%
\begin%
{array}{l}
\forall a:\stype.\forall m:\snat.\forall n:\snat.~(\tlist(a, m), \tlist(a, n))\timp\tlist(a, m+n) \\
\end{array}\]
\end{center}
which simply means that $\fappend$ returns a list of length $m+n$
when applied to one list of length $m$ and another list of length $n$.
Note that $\stype$ is a built-in sort in $\ATS$, and a static term of
the sort $\stype$ stands for a type (for dynamic terms). Also, $\sint$
is a built-in sort for integers in $\ATS$, and $\snat$ is the subset
sort $\{a:\sint\mid a\geq 0\}$ for all nonnegative integers.

When the above implementation of $\fappend$ is type-checked, the
following two constraints are generated:
\[\begin{array}{cl}
1. & \forall m:nat.\forall n:nat.~m=0 \limplies n=m+n \\
2. & \forall m:nat.\forall n:nat.\forall m':nat.~m=m'+1 \limplies (m'+n)+1 = m+n \\
\end{array}\]
The first constraint is generated when the first clause is
type-checked, which is needed for determining whether the types
$\tlist(a,n)$ and $\tlist(a,m+n)$ are equal under the assumption that
$\tlist(a,m)$ equals $\tlist(a,0)$. Similarly, the second constraint
is generated when the second clause is type-checked, which is needed
for determining whether the types $\tlist(a,(m'+n)+1)$ and
$\tlist(a,m+n)$ are equal under the assumption that $\tlist(a,m)$
equals $\tlist(a,m'+1)$. Clearly, certain restrictions need to be
imposed on the form of constraints allowed in practice so that an
effective approach can be found to perform constraint-solving.  In
$\dmllang$, a programming language based on $\DML$~\cite{DML-jfp07},
the constraints generated during type-checking are required to be
linear inequalities on integers so that the problem of constraint
satisfaction can be turned into the problem of linear integer
programming, for which there are many highly practical solvers (albeit
the problem of linear integer programming itself is NP-complete). This
is indeed a very simple design, but it can also be too restrictive,
sometimes, as nonlinear constraints (e.g., $\forall n:int. n*n\geq 0$)
are commonly encountered in practice. Furthermore, the very nature of
such a design indicates its being inherently {\em ad hoc}.

\def\xs{\mbox{\it xs}}
\def\ys{\mbox{\it ys}}
By combining programming with theorem-proving, a fundamentally
different design of constraint-solving can provide the programmer with
an option to handle nonlinear constraints through explicit proof
construction.  For the sake of a simpler presentation, let us assume
for this moment that even the addition function on integers cannot
appear in the constraints generated during type-checking.  Under such a
restriction, it is still possible to implement a list-append function
in $\atslang$ that is assigned a type capturing the invariant that the
length of the concatenation of two given lists $\xs$ and $\ys$ equals
$m+n$ if $\xs$ and $\ys$ are of length $m$ and $n$, respectively.  Let
us first see such an implementation given in
Figure~\ref{figure:mylist_append_1}, which is presented here as a
motivating example for programming with theorem-proving (PwTP).

\def\tZ{\mbox{\bf Z}}
\def\tS{\mbox{\bf S}}
\def\mynil{\mbox{\tt mynil}}
\def\mycons{\mbox{\tt mycons}}
\def\mylist{\mbox{\bf mylist}}
\def\addrel{\mbox{\bf addrel}}
\def\addrelz{\mbox{\tt addrel\_z}}
\def\addrels{\mbox{\tt addrel\_s}}
\begin%
{figure}
\begin%
{verbatim}
datatype Z() = Z of ()
datatype S(a:type) = S of a
//
datatype
mylist(type, type) =
  | {a:type}
    mynil(a, Z())
  | {a:type}{n:type}
    mycons(a, S(n)) of (a, mylist(a, n))
//
datatype
addrel(type, type, type) =
  | {n:type}
    addrel_z(Z(), n, n) of ()
  | {m,n:type}{r:type}
    addrel_s(S(m), n, S(r)) of addrel(m, n, r)
//
fun
myappend
{a:type}
{m,n:type}
(
  xs: mylist(a, m)
, ys: mylist(a, n)
) : [r:type]
(
  addrel(m, n, r), mylist(a, r)
) =
(
  case xs of
  | mynil() => let
      val pf = addrel_z() in (pf, ys)
    end // end of [mynil]
  | mycons(x, xs) => let
      val (pf, res) = myappend(xs, ys) in (addrel_s(pf), mycons(x, res))
    end // end of [mycons]
)
\end{verbatim}
\caption{A motivating example for PwTP in ATS}
\label{figure:mylist_append_1}
\end{figure}
The datatypes $\tZ$ and $\tS$ are declared in
Figure~\ref{figure:mylist_append_1} solely for representing natural
numbers: $\tZ$ represents $0$, and $\tS(N)$ represents the successor
of the natural number represented by $N$. The data constructors
associated with $\tZ$ and $\tS$ are of no use.  Given a type $T$ and
another type $N$, $\mylist(T, N)$ is a type for lists containing $n$
elements of the type $T$, where $n$ is the natural number represented
by $N$. Note that $\mylist$ is not a standard datatype (as is
supported in ML); it is a {\em guarded recursive datatype}
(GRDT)~\cite{GRDT-popl03}, which is also known as {\em generalized
  algebraic datatype} (GADT)~\cite{PhantomTypes} in Haskell and OCaml.
The datatype $\addrel$ is declared to capture the relation induced by
the addition function on natural numbers. Given types $M$, $N$, and
$R$ representing natural numbers $m$, $n$, and $r$, respectively, the
type $\addrel(M, N, R)$ is for a value representing some proof of
$m+n=r$. Note that $\addrel$ is also a GRDT or GADT. There are two
constructors $\addrelz$ and $\addrels$ associated with $\addrel$,
which encode the following two rules:
\[
\begin%
{array}{rcll}
0 + n & = & n & \mbox{for every natural number $n$} \\
(m+1) + n & = & (m+n)+1 & \mbox{for every pair of natural numbers $m$ and $n$} \\
\end{array}\]

\def\fmyappend{\mbox{\tt myappend}}
Let us now take a look at the implementation of $\fmyappend$.
Formally, the type assigned to $\fmyappend$ can be written as follows:
\[
\begin%
{array}{l}
\forall a:\stype.\forall m:\stype.\forall n:\stype. \\
~~~~(\mylist(a, m), \mylist(a, n)) \timp \exists r:\stype.~(\addrel(m, n, r), \mylist(a, r)) \\
\end{array}
\]
In essence, this type states the following: Given two lists of length
$m$ and $n$, $\fmyappend$ returns a pair such that the first component
of the pair is a proof showing that $m+n$ equals $r$ for some natural
number $r$ and the second component is a list of length $r$.

Unlike $\fappend$, type-checking $\fmyappend$ does not generate any
linear constraints on integers.  As a matter of fact, $\fmyappend$ can
be readily implemented in both Haskell and OCaml (extended with
support for generalized algebraic datatypes), where there is no
built-in support for handling linear constraints on integers.  This is
an example of great significance in the sense that it demonstrates
concretely an approach to allowing the programmer to write code of the
nature of theorem-proving so as to simplify or even eliminate certain
constraints that need otherwise to be solved directly during
type-checking.  With this approach, constraint-solving is effectively
internalized, and the programmer can actively participate in
constraint simplification, gaining a tight control in determining what
constraints should be passed to the underlying constraint-solver.

There are some major issues with the implementation given in
Figure~\ref{figure:mylist_append_1}. Clearly, representing natural
numbers as types is inadequate since there are types that do not
represent any natural numbers. More seriously, this representation
turns quantification over natural numbers (which is predicative) into
quantification over types (which is impredicative), causing
unnecessary complications. Also, proof construction (that is,
construction of values of types formed by $\addrel$) needs to be
actually performed at run-time, which causes inefficiency both
time-wise and memory-wise. Probably the most important issue is that
proof validity is not guaranteed. For instance, it is entirely
possible to fake proof construction by making use of non-terminating
functions.

\def\mynat{{\it mynat}}
\begin%
{figure}
\begin%
{verbatim}
datasort
mynat = Z of () | S of mynat
//
datatype
mylist(type, mynat) =
  | {a:type}
    mynil(a, Z())
  | {a:type}{n:mynat}
    mycons(a, S(n)) of (a, mylist(a, n))
//
dataprop
addrel(mynat, mynat, mynat) =
  | {y:mynat}
    addrel_z(Z, y, y) of ()
  | {x,y:mynat}{r:mynat}
    addrel_s(S(x), y, S(r)) of addrel(x, y, r)
//
fun
myappend
{a:type}
{m,n:mynat}
(
  xs: mylist(a, m)
, ys: mylist(a, n)
) : [r:mynat]
(
  addrel(m, n, r) | mylist(a, r)
) =
(
  case xs of
  | mynil() => let
      val pf = addrel_z() in (pf | ys)
    end // end of [mynil]
  | mycons(x, xs) => let
      val (pf | res) = myappend(xs, ys) in (addrel_s(pf) | mycons(x, res))
    end // end of [mycons]
)
\end{verbatim}
\caption{An example making use of PwTP in $\atslang$}
\label{figure:mylist_append_2}
\end{figure}
In Figure~\ref{figure:mylist_append_2}, another implementation of
$\fmyappend$ is given that makes use of the support for PwTP in
$\atslang$. Instead of representing natural numbers as types, a
datasort of the name $\mynat$ is declared and natural numbers can be
represented as static terms of the sort $\mynat$. Also, a dataprop
$\addrel$ is declared for capturing the relation induced by the
addition function on natural numbers. As a dataprop, $\addrel$ can
only form types for values representing proofs, which are erased after
type-checking and thus need no construction at run-time. In the
implementation of $\fmyappend$, the bar symbol ($\mbox{\tt |}$) is
used in place of the comma symbol to separate components in tuples;
the components appearing to the left of the bar symbol are proof
expressions (to be erased) and those to the right are dynamic
expressions (to be evaluated). After proof-erasure, the implementation
of $\fmyappend$ essentially matches that of $\fappend$ given in
Figure~\ref{figure:list_append_function}.

As a framework to facilitate the design and formalization of
advanced type systems for practical programming, $\ATS$ is first
formulated with no support for PwTP~\cite{ATS-types03}. This
formulation is the basis for a type system referred to as $\ATSzero$
in this paper. The support for PwTP is added into $\ATS$ in a
subsequent formulation~\cite{CPwTP-icfp05}, which serves as the basis
for a type system referred to as $\ATSproof$ in this paper. However, a
fundamentally different approach is adopted in $\ATSproof$ to justify
the soundness of PwTP, which essentially translates each well-typed
program in $\ATSproof$ into another well-typed one in $\ATSzero$ of
the same dynamic semantics.  The identification and formalization of
this approach, which is both simpler and more general than one used
previously~\cite{CPwTP-icfp05}, consists of a major technical
contribution of the paper.

It is intended that
the paper should focus on the theoretical development of $\ATS$,
and the presentation given is of a minimalist style.
The organization for the rest of the paper is given as follows. An
untyped $\lambda$-calculus $\lamdyn$ is first presented in
Section~\ref{section:lamdyn} for the purpose of introducing some basic
concepts needed to formally assign dynamic (that, operational)
semantics to programs. In Section~\ref{section:ATSzero}, a generic
applied type system $\ATSzero$ is formulated and its type-soundness
established.  Subsequently, $\ATSzero$ is extended to $\ATSproof$ in
Section~\ref{section:ATSproof} with support for PwTP, and the
type-soundness of $\ATSproof$ is reduced to that of $\ATSzero$ through
a translation from well-typed programs in the former to those in the
latter. Lastly, some closely related work is discussed in
Section~\ref{section:related} and the paper concludes.

\def\dcc{{\it dcc}}
\def\dcf{{\it dcf}}
\def\dcx{{\it dcx}}
\def\dsif{\mbox{\bf if}}
\def\dfst{\mbox{\bf fst}}
\def\dsnd{\mbox{\bf snd}}
\def\dapp#1#2{{\bf app}(#1,#2)}
\def\dcase#1#2{{\bf case}~#1~{\bf of}~#2}
\def\dsapp#1#2{{\bf sapp}(#1,#2)}
\def\dfix#1#2{{\bf fix}\;#1.#2}
\def\dlam#1#2{{\bf lam}\;#1.\kern1pt#2}
\def\dslam#1#2{{\bf slam}\;#1.\kern1pt#2}
\def\dletin#1#2{{\bf let}\;#1\;{\bf in}\;#2}
\def\dsletin#1#2{{\bf let}\;#1\;{\bf in}\;#2}
\def\dtuple#1{\langle #1\rangle}
\def\dstuple#1{\langle #1\rangle}
\def\deguard#1{\supset^{-}\kern-2pt(#1)}
\def\diguard#1{\supset^{+}\kern-2pt(#1)}
\def\dassert#1{\Band\kern-0.50pt(#1)}
\def\emptydsub{[]}
\def\subst#1#2#3{#3[#2\mapsto #1]}

\section%
{%
Untyped%
~$\lambda$-Calculus $\lamdyn$%
}\label{section:lamdyn}
The purpose of formulating $\lamdyn$, an untyped lambda-calculus
extended with constants (including constant constructors and constant
functions), is to set up some machinery needed to formalize dynamic
(that is, operational) semantics for programs. It is to be proven that
a well-typed program in $\ATS$ can be turned into one in $\lamdyn$
through type-erasure and proof-erasure while retaining its dynamic
semantics, stressing the point that types and proofs in $\ATS$ play no
active r{\^o}le in the evaluation of a program. In this regard, the
form of typing studied in $\ATS$ is of Curry-style (in contrast with
Church-style)~\cite{REYNOLDS-book98}.

There are no static terms in $\lamdyn$. The syntax for the dynamic
terms in $\lamdyn$ is given as follows:
\[%
\begin%
{array}{lrcl}
\mbox{dynamic terms} & e & ::= & %
x\mid\dcx(\vec{e})\mid
\dtuple{e_1,e_2}\mid\dfst(e)\mid\dsnd(e)\mid \\
& & & \dlam{x}{e}\mid\dapp{e_1}{e_2}\mid\dletin{x=e_1}{e_2} \\
\end{array}\]
where the notation $\vec{e}$ is for a possibly empty sequence of
dynamic terms.  Let $\dcx$ range over external dynamic constants,
which include both dynamic constructors $\dcc$ and dynamic functions
$\dcf$. The arguments taken by a dynamic constructor or function are
often primitive values (instead of those constructed by $\mbox{\bf
  lam}$ and $\dtuple{\cdot,\cdot}$) and the result returned by it is
often a primitive value as well.  The meaning of various forms of
dynamic terms should become clear when the rules for evaluating them
are given.

The values in $\lamdyn$ are just special forms of dynamic terms, and
the syntax for them is given as follows:
\[%
\begin%
{array}{lrcl}
\mbox{values} & v & ::= & x\mid\dcc(\vec{v})\mid\dtuple{v_1,v_2}\mid\dlam{x}{e} \\
\end{array}\]
where $\vec{v}$ is for a possibly empty sequence of values.
A standard approach to assigning dynamic semantics to terms is based
on the notion of evaluation contexts:
\[%
\begin%
{array}{lrcl}
\mbox{evaluation contexts} & E & ::= & %
[]\mid\dcx(v_1,\ldots,v_{i-1},E,e_{i+1},\ldots,e_n)\mid \\
& & & \dtuple{E,e}\mid\dtuple{v,E}\mid\dapp{E}{e}\mid\dapp{v}{E}\mid\dletin{x=E}{e} \\
\end{array}\]
Essentially, an evaluation context $E$ is a dynamic term in which a
subterm is replaced with a hole denoted by $[]$. Note that only
subterms at certain positions in a dynamic term can be replaced to
form valid evaluation contexts.

\def\eval{\rightarrow}
\def\meval{\rightarrow^{*}}
\begin%
{definition}
The redexes in $\lamdyn$ and their reducts are defined as follows:
\begin%
{itemize}
\item
$\dfst(\dtuple{v_1,v_2})$ is a redex, and its reduct is $v_1$.
\item
$\dsnd(\dtuple{v_1,v_2})$ is a redex, and its reduct is $v_2$.
\item
$\dapp{\dlam{x}{e}}{v}$ is a redex, and its reduct is $\subst{v}{x}{e}$.
\item
$\dcf(\vec{v})$ is a redex if it is defined to equal some value $v$;
if so, its reduct is $v$.
\end{itemize}
\end{definition}
Note that it may happen later that a new form of redex can have more
than one reducts. Given a dynamic term of the form $E[e_1]$ for some
redex $e_1$, $E[e_1]$ is said to reduce to $E[e_2]$ in one-step if
$e_2$ is a reduct of $e_1$, and this one-step reduction is denoted by
$E[e_1]\eval E[e_2]$.  Let $\meval$ stand for the reflexive and
transitive closure of $\eval$.

Given a program (that is, a closed dynamic term) $e_0$ in $\lamdyn$, a
finite reduction sequence starting from $e_0$ can either lead to a
value or a non-value. If a non-value cannot be further reduced, then
the non-value is said to be {\em stuck} or in a {\em stuck} form. In
practice, values can often be represented in special manners to allow
various stuck forms to be detected through checks performed at
run-time. For instance, the representation of a value in a dynamically
typed language most likely contains a tag to indicate the type of the
value. If it is detected that the evaluation of a program reaches a
stuck form, then the evaluation can be terminated abnormally with a
raised exception.

Detecting potential stuck forms that may occur during the evaluation
of a program can also be done statically (that is, at
compiler-time). One often imposes a type discipline to ensure the
absence of various stuck forms during the evaluation of a well-typed
program. This is the line of study to be carried out in the rest of
the paper.

\section%
{%
Formal Development of $\ATSzero$%
}\label{section:ATSzero}

As a generic applied type system, $\ATSzero$ consists of a static
component (statics), where types are formed and reasoned about, and a
dynamic component (dynamics), where programs are constructed and
evaluated.  The statics itself is a simply typed lambda-calculus
(extended with certain constants), and the types in it are called {\em
  sorts} so as to avoid confusion with the types for classifying
dynamic terms, which are themselves static terms.

\def\basesort{b}
\def\simp{\rightarrow}
\def\Simp{\Rightarrow}
\def\scc{\mbox{\it scc}}
\def\scf{\mbox{\it scf}}
\def\scx{\mbox{\it scx}}
\def\emptyssub{[]}
\def\ttrue{\mbox{\it true}}
\def\ffalse{\mbox{\it false}}
\begin{figure}
\[\begin{array}{lrcl}
\mbox{sorts} %
   & \sigma & ::= & \basesort \mid\sigma_1\simp\sigma_2 \\
\mbox{static terms} %
   & s & ::= & a \mid \scx(s_1,\ldots,s_n) \mid \lambda a:\sigma.s \mid s_1(s_2) \\
\mbox{static var. ctx.} %
   & \Sigma & ::= & \emptyset \mid \Sigma, a:\sigma \\
\mbox{static subst.} %
   & \Theta & ::= & \emptyssub \mid \Theta[a\mapsto s] \\
\end{array}\]
\caption{The syntax for the statics of $\ATSzero$}
\label{figure:syntax_for_ATS0_statics}
\end{figure}
The syntax for the statics of $\ATSzero$ is given in
Figure~\ref{figure:syntax_for_ATS0_statics}.  Let $\basesort$ range
over the base sorts in $\ATSzero$, which include at least $\sbool$ for
static booleans and $\stype$ for types (assigned to dynamic
terms). The base sort $\sint$ for static integers is not really needed
for formalizing $\ATSzero$ but it is often used in the presented
examples. Let $a$ and $s$ range over static variables and static
terms, respectively.  There may be some built-in static constants
$\scx$, which are either static constant constructors $\scc$ or static
constant functions $\scf$.  A c-sort is of the form
$(\sigma_1,\ldots,\sigma_n)\Simp b$, which can only be assigned to
static constants. Note that a c-sort is not considered a (regular)
sort.  Given a static constant $\scx$, a static term
$\scx(s_1,\ldots,s_n)$ is of sort $b$ if $\scx$ is assigned a c-sort
$(\sigma_1,\ldots,\sigma_n)\Simp b$ for some sorts
$\sigma_1,\ldots,\sigma_n$ and $s_i$ can be assigned the sorts
$\sigma_i$ for $i=1,\ldots,n$.  It is allowed to write $\scc$ for
$\scc()$ if there is no risk of confusion.  In $\ATSzero$, the
existence of the following static constants with the assigned c-sorts
is assumed:
\def\Band{\land}
\def\Bimp{\supset}
\def\tyleq{\leq_{ty}}
\def\tand{*}
\def\timp{\rightarrow}
\def\Timp{\Rightarrow}
\def\tunit{{\mathbf 1}}
\def\tuple#1{\langle #1\rangle}
\[\begin{array}{ccl}
\ttrue %
      & : & ()\Simp \sbool \\
\ffalse %
      & : & ()\Simp \sbool \\
\tyleq %
      & : & (\stype,\stype) \Simp \sbool \\
\tand & : & (\stype,\stype) \Simp \stype \\
\timp & : & (\stype,\stype) \Simp \stype \\
\Band & : & (\sbool,\stype) \Simp \stype \\
\Bimp & : & (\sbool,\stype) \Simp \stype \\
\forall_\sigma & : & (\sigma\simp\stype)\Simp\stype \\
\exists_\sigma & : & (\sigma\simp\stype)\Simp\stype \\
\end{array}\]
Note that infix notation may be used for certain static constants. For
instance, $s_1\timp s_2$ stands for $\timp(s_1,s_2)$ and $s_1\tyleq
s_2$ stands for $\tyleq(s_1,s_2)$.  In addition, $\forall a:\sigma.s$
and $\exists a:\sigma.s$ stand for $\forall_\sigma(\lambda
a:\sigma.s)$ and $\exists_\sigma(\lambda a:\sigma.s)$, respectively.
Given a static constant constructor $\scc$, if the c-sort assigned to
$\scc$ is $(\sigma_1,\ldots,\sigma_n)\Simp\stype$ for some sorts
$\sigma_1,\ldots,\sigma_n$, then $\scc$ is a type constructor.  For
instance, $\tand$, $\timp$, $\Band$, $\Bimp$, $\forall_{\sigma}$ and
$\exists_{\sigma}$ are all type constructors.  Additional built-in base
type constructors may be assumed.

\def\tInt{\mbox{\bf Int}}
\def\tNat{\mbox{\bf Nat}}
\def\tbool{\mbox{\bf bool}}
\def\tBool{\mbox{\bf Bool}}
Given a proposition $B$ and a type $T$, $B\Bimp T$ is a guarded type and
$B\Band T$ is an asserting type. Intuitively, if a value $v$ is assigned a
guarded type $B\Bimp T$, then $v$ can be used only if the guard $B$ is
satisfied; if a value $v$ of an asserting type $B\Band T$ is generated at a
program point, then the assertion $B$ holds at that point. For instance,
suppose that $\sint$ is a sort for (static) integers and $\tint$ is a type
constructor of the sort $(\sint)\Simp\stype$; given a static term $s$ of
the sort $\sint$, $\tint(s)$ is a singleton type for the integer equal to
$s$; hence, the usual type $\tInt$ for (dynamic) integers can be defined as
$\exists a:\sint.~\tint(a)$, and the type $\tNat$ for natural numbers can
be defined as $\exists a:\sint.~(a\geq 0)\Band\tint(a)$.  Moreover, the
following type is for the (dynamic) division function on integers:
\[%
\begin%
{array}{c}
\forall a_1:\sint.\forall a_2:\sint.~a_2\not=0\Bimp(\tint(a_1),\tint(a_2))\timp\tint(a_1/a_2) \\
\end{array}\]
where the meaning of $\not=$ and $/$ should be obvious.  With such a type,
division by zero is disallowed during type-checking (at
compile-time). Also, suppose that $\tbool$ is a type constructor of the
sort $(\sbool)\Simp\stype$ such that for each proposition $B$, $\tbool(B)$
is a singleton type for the truth value equal to $B$. Then the usual type
$\tBool$ for (dynamic) booleans can be defined as $\exists a:\sbool.~\tbool(a)$.
The following type is an interesting one:
$$\forall a:\sbool.~\tbool(a)\timp a\Band\tunit$$ where $\tunit$
stands for the unit type. Given a function $f$ of this type, we can
apply $f$ to a boolean value $v$ of type $\tbool(B)$ for some
proposition $B$; if $f(v)$ returns, the $B$ must be true; therefore
$f$ acts like dynamic assertion-checking.

For those familiar with qualified types~\cite{QualifiedTypes}, which
underlies the type class mechanism in Haskell, it should be noted that
a qualified type cannot be regarded as a guarded type.  The simple
reason is that the proof of a guard in $\ATSzero$ bears no
computational significance, that is, it cannot affect the run-time behavior
of a program, while a dictionary, which is just a proof of some
predicate on types in the setting of qualified types, can and is
mostly likely to affect the run-time behavior of a program.

\def\strule{st}
\def\tyrule{ty}
\def\regrule{reg}
\def\tpjg{\vdash}
\def\temd{\models}
\begin{figure}
\[\begin{array}{c}
\infer[\mbox{\bf(\strule-var)}]
      {\Sigma \tpjg a:\sigma}
      {\Sigma(a) = \sigma} \\[4pt]

\infer[\mbox{\bf(\strule-scx)}]
      {\Sigma\tpjg\scx(s_1,\ldots,s_n): b}
      {\tpjg\scx: (\sigma_1,\ldots,\sigma_n)\Simp b &
       \Sigma\tpjg s_1:\sigma_1 & \cdots & \Sigma\tpjg s_n:\sigma_n} \\[4pt]

\infer[\mbox{\bf(\strule-lam)}]
      {\Sigma \tpjg \lambda a:\sigma_1.s: \sigma_1\simp\sigma_2}
      {\Sigma, a:\sigma_1 \tpjg s:\sigma_2} \\[4pt]

\infer[\mbox{\bf(\strule-app)}]
      {\Sigma \tpjg s_1(s_2):\sigma_2}
      {\Sigma \tpjg s_1:\sigma_1\simp\sigma_2 &
       \Sigma \tpjg s_2:\sigma_1} \\[4pt]

\end{array}\]
\caption{The sorting rules for the statics of $\ATSzero$}
\label{figure:ATS0_sorting_rules}
\end{figure}
\def\vB{\vec{B}}
\def\dom{\mbox{\bf dom}}
The standard rules for assigning sorts to static terms are given in
Figure~\ref{figure:ATS0_sorting_rules}, where the judgement $\tpjg
\scx: (\sigma_1,\ldots,\sigma_n)\Simp b$ means that the static
constant $\scx$ is assumed to be of the c-sort
$(\sigma_1,\ldots,\sigma_n)\Simp b$.  Given $\vec{s}=s_1,\ldots,s_n$
and $\vec{\sigma}=\sigma_1,\ldots,\sigma_n$, a judgement of the form
$\Sigma\tpjg \vec{s}:\vec{\sigma}$ means $\Sigma\tpjg s_i:\sigma_i$
for $i=1,\ldots,n$.  Let $B$ stand for a static term that can be
assigned the sort $\sbool$ (under some context $\Sigma$) and $\vB$ a
possibly empty sequence of static boolean terms. Also, let $T$ stand
for a type (for dynamic terms), which is a static term that can be
assigned the sort $\stype$ (under some context $\Sigma$). Given
contexts $\Sigma_1$ and $\Sigma_2$ and a substitution $\Theta$, the
judgement $\Sigma_1\tpjg\Theta:\Sigma_2$ means that
$\Sigma_1\tpjg\Theta(a):\Sigma_2(a)$ is derivable for each
$a\in\dom(\Theta)=\dom(\Sigma_2)$.

\begin%
{proposition}\label{prop:static-subst}
Assume $\Sigma\tpjg s:\sigma$ is derivable.  If
$\Sigma=\Sigma_1,\Sigma_2$ and $\Sigma_1\tpjg\Theta:\Sigma_2$ holds,
then $\Sigma_1\tpjg s[\Theta]:\sigma$ is derivable.
\end{proposition}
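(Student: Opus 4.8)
The plan is to proceed by structural induction on the derivation of the sorting judgement $\Sigma\tpjg s:\sigma$, with a case analysis on the last rule applied. In each case the substitution $s[\Theta]$ commutes with the top-level term former, so that after invoking the induction hypothesis on the immediate subderivations one can reassemble the conclusion by reapplying the same sorting rule from Figure~\ref{figure:ATS0_sorting_rules}. Throughout I rely on the fact that the sorts of $\ATSzero$ carry no dependencies, so that contexts behave as finite maps from variables to sorts and both exchange and weakening are admissible for the sorting judgement; I also assume, as usual, that the variables bound in $\Sigma=\Sigma_1,\Sigma_2$ are pairwise distinct, so that $\dom(\Sigma_1)$ and $\dom(\Sigma_2)=\dom(\Theta)$ are disjoint.

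The base case \textbf{(st-var)} is where the substitution actually acts, so it deserves the most care among the routine cases. Here $s$ is a variable $a$ with $\Sigma(a)=\sigma$, and I split on whether $a\in\dom(\Sigma_2)$ or $a\in\dom(\Sigma_1)$. In the former subcase $s[\Theta]=\Theta(a)$, and the hypothesis $\Sigma_1\tpjg\Theta:\Sigma_2$ yields exactly $\Sigma_1\tpjg\Theta(a):\Sigma_2(a)$, with $\Sigma_2(a)=\sigma$ since $\Sigma=\Sigma_1,\Sigma_2$. In the latter subcase $a\notin\dom(\Theta)$, so $s[\Theta]=a$ while $\Sigma_1(a)=\sigma$, whence \textbf{(st-var)} reestablishes the judgement over $\Sigma_1$. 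The cases \textbf{(st-scx)} and \textbf{(st-app)} are then immediate: substitution distributes over $\scx(s_1,\ldots,s_n)$ and over $s_1(s_2)$, the induction hypothesis applies to each premise (the ambient split $\Sigma_1,\Sigma_2$ and the substitution $\Theta$ being unchanged), and reapplying the corresponding rule closes the case.

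The one delicate case, and the step I expect to be the main obstacle, is \textbf{(st-lam)}, where $s=\lambda a:\sigma_1.s'$ and the premise is $\Sigma,a:\sigma_1\tpjg s':\sigma_2$. To invoke the induction hypothesis I must present the extended context as a split of the required shape, which I do by taking the bound variable $a$ fresh (renaming by $\alpha$-conversion if necessary) so that $a$ lies in neither $\dom(\Sigma_1)$ nor $\dom(\Sigma_2)$ and occurs free in no $\Theta(b)$, and then reading $\Sigma,a:\sigma_1$ as $(\Sigma_1,a:\sigma_1),\Sigma_2$. The subtlety is that the induction hypothesis now demands $\Sigma_1,a:\sigma_1\tpjg\Theta:\Sigma_2$ rather than the given $\Sigma_1\tpjg\Theta:\Sigma_2$. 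I therefore need an auxiliary weakening property for sorting, namely that $\Sigma'\tpjg s:\sigma$ implies $\Sigma',a:\sigma_1\tpjg s:\sigma$ whenever $a$ is fresh; this is itself a routine structural induction on the sorting derivation, and lifting it pointwise over $\dom(\Theta)$ produces the strengthened substitution judgement. With $a$ chosen fresh one has $s[\Theta]=\lambda a:\sigma_1.(s'[\Theta])$, so the induction hypothesis gives $\Sigma_1,a:\sigma_1\tpjg s'[\Theta]:\sigma_2$, and one final application of \textbf{(st-lam)} delivers $\Sigma_1\tpjg s[\Theta]:\sigma_1\simp\sigma_2$, completing the induction.
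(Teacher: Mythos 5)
Your proof is correct and follows essentially the same route as the paper, which simply states ``by structural induction on the derivation of $\Sigma\tpjg s:\sigma$'' and leaves the details implicit. Your elaboration of the variable case and the weakening/exchange bookkeeping needed for the \textbf{(st-lam)} case is exactly what that one-line proof tacitly assumes.
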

\begin%
{proof}
By structural induction on the derivation of $\Sigma\tpjg s:\sigma$.
\hfill
\end{proof}

\begin{figure}[thp]
\[\begin{array}{c}
\infer[\mbox{\bf(\regrule-id)}]
      {\Sigma;\vB\temd B}
      {B\in\vB} \\[4pt]

\infer[\mbox{\bf(\regrule-true)}]
      {\Sigma;\vB\temd\ttrue}
      {} \\[4pt]

\infer[\mbox{\bf(\regrule-false)}]
      {\Sigma;\vB\temd B}
      {\Sigma;\vB\temd\ffalse} \\[4pt]




\infer[\mbox{\bf(\regrule-var-thin)}]
      {\Sigma, a:\sigma;\vB \temd B}
      {\Sigma;\vB \temd B} \\[4pt]

\infer[\mbox{\bf(\regrule-bool-thin)}]
      {\Sigma;\vB,B_1 \temd B_2}
      {\Sigma\tpjg B_1:\sbool & \Sigma;\vB \temd B_2} \\[4pt]

\infer[\mbox{\bf(\regrule-subst)}]
      {\Sigma; \subst{s}{a}{\vB} \temd \subst{s}{a}{B}}
      {\Sigma, a:\sigma; \vB \temd B & \Sigma\tpjg s:\sigma} \\[4pt]

\infer[\mbox{\bf(\regrule-cut)}]
      {\Sigma;\vB \temd B_2}
      {\Sigma; \vB \temd B_1 & \Sigma;\vB, B_1 \temd B_2} \\[4pt]

\end{array}\]
\caption{The regularity rules for the constraint relation in $\ATSzero$}
\label{figure:ATS0_regularity_rules}
\end{figure}

\begin%
{definition}%
[Constraints in $\ATSzero$]
\label{ATS0_constraint_def}
A constraint in $\ATSzero$ is of the form $\Sigma;\vB\temd B_0$,
where $\Sigma\tpjg B:\sbool$ holds for each $B$ in $\vB$ and
$\Sigma\tpjg B_0:\sbool$ holds as well, and the constraint relation in
$\ATSzero$ is the one that determines whether each constraint is true
or false.

Each regularity rule in Figure~\ref{figure:ATS0_regularity_rules} is
assumed to be met, that is, the conclusion of each regularity rule
holds if all of its premisses hold, and the following regularity
conditions on $\tyleq$ are also satisfied:
\begin{enumerate}
\item
$\Sigma;\vB\temd T\tyleq T$ holds for every $T$.
\item
$\Sigma;\vB\temd T\tyleq T'$ and $\Sigma;\vB\temd T'\tyleq T''$ implies
$\Sigma;\vB\temd T\tyleq T''$.
\item
$\Sigma;\vB\temd T_1\tand T_2\tyleq T'_1\tand T'_2$ implies
$\Sigma;\vB\temd T_1\tyleq T'_1$ and $\Sigma;\vB\temd T_2\tyleq T'_2$.
\item
$\Sigma;\vB\temd T_1\timp T_2\tyleq T'_1\timp T'_2$ implies
$\Sigma;\vB\temd T'_1\tyleq T_1$ and $\Sigma;\vB\temd T_2\tyleq T'_2$.
\item
$\Sigma;\vB\temd B\Band T\tyleq B'\Band T'$ implies
$\Sigma;\vB,B\temd B'$ and $\Sigma;\vB,B\temd T\tyleq T'$.
\item
$\Sigma;\vB\temd B\Bimp T\tyleq B'\Bimp T'$ implies
$\Sigma;\vB,B'\temd B$ and $\Sigma;\vB,B'\temd T\tyleq T'$.
\item
$\Sigma;\vB\temd \forall a:\sigma.T\tyleq\forall a:\sigma.T'$ implies
$\Sigma, a:\sigma;\vB\temd T\tyleq T'$.
\item
$\Sigma;\vB\temd \exists a:\sigma.T\tyleq\exists a:\sigma.T'$ implies
$\Sigma, a:\sigma;\vB\temd T\tyleq T'$.
\item
$\emptyset;\emptyset\temd\scc(T_1,\ldots,T_n)\tyleq T'$ implies $T'=\scc(T'_1,\ldots,T'_{n})$
for some $T'_1, \ldots, T'_n$.
\end{enumerate}
\end{definition}
The need for these conditions is to become clear when proofs are
constructed in the following presentation for formally establishing
various meta-properties of $\ATSzero$. For instance, the last of the
above conditions can be invoked to make the claim that $T'\tyleq
T_1\timp T_2$ implies $T'$ being of the form $T'_1\timp T'_2$. Note
that this condition actually implies the consistency of the constraint
relation as not every constraint is valid.

\begin%
{figure}
\[%
\begin%
{array}{lrcl}
\mbox{dynamic terms} %
                     & e & ::= & %
                           x \mid \dcx\{\vec{s}\}(e_1,\ldots,e_n) \mid \\
                     & & & \dtuple{e_1,e_2} \mid \dfst(e) \mid \dsnd(e) \mid \dlam{x}{e} \mid \dapp{e_1}{e_2} \mid \\
                     & & & \diguard{e} \mid \;\deguard{e} \mid \dslam{a}{e} \mid \dsapp{e}{s} \mid \\
                     & & & \dassert{e} \mid \dletin{\dassert{x}=e_1}{e_2} \mid \dstuple{s,e} \mid \dsletin{\dstuple{a,x}=e_1}{e_2} \\
\mbox{dynamic values} %
                     & v & ::= & x \mid \dcc\{\vec{s}\}(v_1,\ldots,v_n) \mid \\
                     & & & \dtuple{v_1,v_2} \mid \dlam{x}{e} \mid \diguard{e} \mid \dslam{a}{e} \mid \dassert{v} \mid \dstuple{s,v} \\
\mbox{dynamic var. ctx.} %
                     & \Delta & ::= & \emptyset \mid \Delta, x:T \\[6pt]
\mbox{dynamic subst.} %
                     & \Theta & ::= & \emptydsub \mid \Theta[x\mapsto e] \\[6pt]
\end{array}\]
\caption{The syntax for the dynamics in $\ATSzero$}
\label{figure:syntax_for_ATS0_dynamics}
\end{figure}
Let us now move onto the dynamic component (dynamics) of $\ATSzero$.
The syntax for the dynamics of $\ATSzero$ is given in
Figure~\ref{figure:syntax_for_ATS0_dynamics}.  Let $x$ range over
dynamic variables and $\dcx$ dynamic constants, which include both
dynamic constant constructors $\dcc$ and dynamic constant functions
$\dcf$.  Some (unfamiliar) forms of dynamic terms are to be understood
when the rules for assigning types to them are presented. Let $v$
range over values, which are dynamic terms of certain special forms,
and $\Delta$ range over dynamic variable contexts, which assign types
to dynamic variables.

\def\Der{{\cal D}}
\def\height#1{\mbox{\it ht}(#1)}
During the formal development of $\ATSzero$, proofs are often constructed
by induction on derivations (represented as trees).  Given a judgement $J$,
$\Der::J$ means that $\Der$ is a derivation of $J$, that is, the conclusion
of $\Der$ is $J$.  Given a derivation $\Der$, $\height{\Der}$ stands for
the height of the tree that represents $\Der$.

\begin{figure}[thp]
\fontsize{10}{11}\selectfont
\[\begin{array}{c}
\infer[\mbox{\bf(\tyrule-var)}]
      {\Sigma; \vB; \Delta \tpjg x:T}
      {\tpjg\Sigma; \vB; \Delta & \Delta(x) = T} \\[2pt]

\infer[\mbox{\bf(\tyrule-sub)}]
      {\Sigma; \vB; \Delta \tpjg e:T'}
      {\Sigma; \vB; \Delta \tpjg e:T & \Sigma;\vB \temd T\tyleq T'} \\[2pt]

\infer[\mbox{\bf(\tyrule-dcx)}]
      {\Sigma; \vB; \Delta \tpjg \dcx\{\vec{s}\}(e_1,\ldots,e_n):\subst{\vec{s}}{\vec{a}}{T}}
      {$$\begin{array}{c}
       \tpjg\dcx: \forall\vec{a}:\vec{\sigma}.\vB_0\Bimp(T_1,\ldots,T_n)\Timp T \\
       \Sigma\tpjg \vec{s}:\vec{\sigma} \kern18pt
       \Sigma;\vB \temd \subst{\vec{s}}{\vec{a}}{B}~~\mbox{for each $B\in\vB_0$} \\
       \Sigma;\vB;\Delta \tpjg e_i:\subst{\vec{s}}{\vec{a}}{T_i}\kern6pt\mbox{for $i=1,\ldots,n$} \\
       \end{array}$$} \\[2pt]

\infer[\mbox{\bf(\tyrule-tup)}]
      {\Sigma; \vB; \Delta\tpjg \dtuple{e_1,e_2}:T_1 \tand T_2}
      {\Sigma; \vB; \Delta\tpjg e_1: T_1 & \Sigma; \vB; \Delta\tpjg e_2: T_2} \\[2pt]

\infer[\mbox{\bf(\tyrule-fst)}]
      {\Sigma; \vB; \Delta\tpjg \dfst(e):T_1}
      {\Sigma; \vB; \Delta\tpjg e: T_1 * T_2}
\kern22pt
\infer[\mbox{\bf(\tyrule-snd)}]
      {\Sigma; \vB; \Delta\tpjg \dsnd(e):T_2}
      {\Sigma; \vB; \Delta\tpjg e: T_1 \tand T_2} \\[2pt]

\infer[\mbox{\bf(\tyrule-lam)}]
      {\Sigma; \vB; \Delta\tpjg \dlam{x}{e}: T_1\timp T_2}
      {\Sigma; \vB; \Delta, x: T_1\tpjg e: T_2} \\[2pt]

\infer[\mbox{\bf(\tyrule-app)}]
      {\Sigma; \vB; \Delta\tpjg \dapp{e_1}{e_2}: T_2}
      {\Sigma; \vB; \Delta\tpjg e_1: T_1\timp T_2 &
       \Sigma; \vB; \Delta\tpjg e_2: T_1} \\[2pt]

\infer[\mbox{\bf(\tyrule-$\Bimp$-intr)}]
      {\Sigma; \vB; \Delta \tpjg\; \diguard{e}: B\Bimp T}
      {\Sigma; \vB, B; \Delta \tpjg e: T} \\[2pt]

\infer[\mbox{\bf(\tyrule-$\Bimp$-elim)}]
      {\Sigma; \vB; \Delta \tpjg\; \deguard{e}: T}
      {\Sigma; \vB; \Delta \tpjg e: B\Bimp T & \Sigma; \vB\temd B} \\[2pt]

\infer[\mbox{\bf(\tyrule-$\Band$-intr)}]
      {\Sigma; \vB; \Delta \tpjg \Band(e): B\Band T}
      {\Sigma; \vB \temd B & \Sigma; \vB; \Delta \tpjg e: T} \\[2pt]

\infer[\mbox{\bf(\tyrule-$\Band$-elim)}]
      {\Sigma; \vB; \Delta \tpjg \dsletin{\Band(x)=e_1}{e_2}: T_2}
      {\Sigma; \vB; \Delta \tpjg e_1: B\land T_1 & \Sigma; \vB,B; \Delta, x:T_1 \tpjg e_2: T_2} \\[2pt]

\infer[\mbox{\bf(\tyrule-$\forall$-intr)}]
      {\Sigma; \vB; \Delta \tpjg \dslam{a}{e}: \forall a:\sigma.T}
      {\Sigma, a:\sigma; \vB; \Delta \tpjg e: T} \\[2pt]

\infer[\mbox{\bf(\tyrule-$\forall$-elim)}]
      {\Sigma; \vB; \Delta \tpjg \dsapp{e}{s}: \subst{s}{a}{T}}
      {\Sigma; \vB; \Delta \tpjg e: \forall a:\sigma.T &
       \Sigma \tpjg s:\sigma} \\[2pt]

\infer[\mbox{\bf(\tyrule-$\exists$-intr)}]
      {\Sigma; \vB; \Delta \tpjg \dstuple{s,d}: \exists a:\sigma.T}
      {\Sigma \tpjg s:\sigma &
       \Sigma; \vB; \Delta \tpjg e: \subst{s}{a}{T}} \\[2pt]

\infer[\mbox{\bf(\tyrule-$\exists$-elim)}]
      {\Sigma; \vB; \Delta \tpjg \dsletin{\dstuple{a,x}=e_1}{e_2}: T_2}
      {\Sigma; \vB; \Delta \tpjg e_1: \exists a:\sigma.T_1 &
       \Sigma, a:\sigma; \vB; \Delta, x:T_1 \tpjg e_2: T_2} \\[2pt]
\end{array}\]
\caption{The typing rules for the dynamics of $\ATSzero$}
\label{figure:typing_rules_for_ATS0_dynamics}
\end{figure}
In $\ATSzero$, a typing judgement is of the form $\Sigma;\vB;\Delta\tpjg
e:T$, and the rules for deriving such a judgement are given in
Figure~\ref{figure:typing_rules_for_ATS0_dynamics}.  Note that certain
obvious side conditions associated with some of the typing rules are
omitted for the sake of brevity. For instance, the variable $a$ is not
allowed to have free occurrences in $\vB$, $\Delta$, or $T$ when the rule
$\mbox{\bf(\tyrule-$\forall$-intr)}$ is applied.

Given $\vB=B_1,\ldots,B_n$, $\vB\Bimp T$ stands for
$B_1\Bimp(\cdots(B_n\Bimp T)\cdots)$.  Given $\vec{a}=a_1,\ldots,a_n$
and $\vec{\sigma}=\sigma_1,\ldots,\sigma_n$,
$\forall\vec{a}:\vec{\sigma}$ stands for the sequence of quantifiers:
$\forall a:\sigma_1.\cdots\forall a:\sigma_n$.  A c-type in $\ATSzero$
is of the form $\forall\vec{a}:\vec{\sigma}.~\vB\Bimp
(T_1,\ldots,T_n)\Timp T$.

The notation
$\tpjg\dcx:\forall\vec{a}:\vec{\sigma}.~\vB\Bimp(T_1,\ldots,T_n)\Timp T$
means that $\dcx$ is assumed to have the c-type following it; if
$\dcx$ is a constructor $\dcc$, then $T$ is assumed to be constructed
by some $\scc$ and $\dcc$ is said to be associated with $\scc$. For
instance, the list constructors and the integer addition and division
functions can be given the following c-types:
\[%
\begin%
{array}{ccl}
\cnil &~:~& \forall a:\stype.~\tlist (a, 0) \\
\ccons &~:~& \forall a:\stype.\forall n:\sint.~n\geq 0\Bimp(a, \tlist (a, n))\timp\tlist(a, n+1) \\
\mbox{\tt iadd} & : & \forall a_1:\sint.\forall a_2:\sint.~(\tint(a_1), \tint(a_2))\Timp\tint(a_1+a_2) \\
\mbox{\tt isub} & : & \forall a_1:\sint.\forall a_2:\sint.~(\tint(a_1), \tint(a_2))\Timp\tint(a_1-a_2) \\
\mbox{\tt imul} & : & \forall a_1:\sint.\forall a_2:\sint.~(\tint(a_1), \tint(a_2))\Timp\tint(a_1\kern1.25pt*\kern1.25pta_2) \\
\mbox{\tt idiv} & : & \forall a_1:\sint.\forall a_2:\sint.~a_2\neq 0\Bimp(\tint(a_1), \tint(a_2))\Timp\tint(a_1/a_2) \\
\end{array}\]
where the type constructors $\tint$ and $\tlist$ are type constructors
of the c-sorts $(\sint)\Simp\stype$ and $(\stype, \sint)\Simp\stype$,
respectively, and $+$, $-$, $*$, and $/$ are static constant functions
of the c-sort $(\sint,\sint)\Simp\sint$.

For a technical reason, the rule $\mbox{\bf(\tyrule-var)}$ is to be
replaced with the following one:
\[\begin{array}{c}
\infer[\mbox{\bf(\tyrule-var')}]
      {\Sigma; \vB; \Delta \tpjg x:T'}
      {\Delta(x) = T & \Sigma;\vB\temd T\tyleq T'} \\[6pt]

\end{array}\]
which combines $\mbox{\bf(\tyrule-var)}$ with
$\mbox{\bf(\tyrule-sub)}$.  This replacement is needed for establishing
the following lemma:
\begin%
{lemma}%
\label{lemma:var_eq}
Assume $\Der::\Sigma;\vB;\Delta,x:T_1\tpjg e:T_2$
and $\Sigma;\vB\temd T'_1\tyleq T_1$. Then there is a derivation
$\Der'$ for the typing judgement $\Sigma;\vB;\Delta,x:T'_1\tpjg e:T_2$
such that $\height{\Der'}=\height{\Der}$.
\end{lemma}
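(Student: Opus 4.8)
The plan is to proceed by induction on $\height{\Der}$, performing a case analysis on the last typing rule applied in $\Der$. The essential observation, and exactly the reason for replacing $\mbox{\bf(\tyrule-var)}$ with the combined rule $\mbox{\bf(\tyrule-var')}$, is that the binding $x:T_1$ is actually consulted only at certain leaves of $\Der$; everywhere else the dynamic context is merely threaded through the premises (possibly extended with fresh dynamic variables, static variables, or boolean assumptions), so the result will follow by applying the induction hypothesis to the immediate subderivations and then re-applying the same rule, which leaves the height unchanged.

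The main case is when $\Der$ ends with $\mbox{\bf(\tyrule-var')}$. If the variable looked up is some $y\neq x$, then $(\Delta,x:T_1)(y)=(\Delta,x:T'_1)(y)$, and the very same instance of the rule yields $\Sigma;\vB;\Delta,x:T'_1\tpjg y:T_2$ at height $\height{\Der}$. If instead the variable is $x$ itself, then the rule must have used $(\Delta,x:T_1)(x)=T_1$ together with a side constraint $\Sigma;\vB\temd T_1\tyleq T_2$. Now $(\Delta,x:T'_1)(x)=T'_1$, and from the hypothesis $\Sigma;\vB\temd T'_1\tyleq T_1$ together with $\Sigma;\vB\temd T_1\tyleq T_2$, transitivity of $\tyleq$ (regularity condition~2 of Definition~\ref{ATS0_constraint_def}) gives $\Sigma;\vB\temd T'_1\tyleq T_2$. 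A single application of $\mbox{\bf(\tyrule-var')}$ then derives $\Sigma;\vB;\Delta,x:T'_1\tpjg x:T_2$, again at height $\height{\Der}$. This is precisely the step that would fail to preserve height under the uncombined rules, since recovering the supertype from $x:T'_1$ would force an extra $\mbox{\bf(\tyrule-sub)}$ node.

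For every other rule, $\Der$ ends with a rule whose premises are typing judgements in which the binding $x:T_1$ still occurs (after $\alpha$-renaming any bound variable so that it differs from $x$). Applying the induction hypothesis to each such premise yields a subderivation of equal height in which $x:T_1$ is replaced by $x:T'_1$, and re-applying the original rule reconstructs a derivation of the conclusion with $x:T'_1$ whose height equals $\height{\Der}$. Care is needed only in the rules that enlarge the contexts in which the subtyping assumption must hold: for $\mbox{\bf(\tyrule-$\forall$-intr)}$ and $\mbox{\bf(\tyrule-$\exists$-elim)}$ the static context gains a fresh $a:\sigma$, and for $\mbox{\bf(\tyrule-$\Bimp$-intr)}$ and $\mbox{\bf(\tyrule-$\Band$-elim)}$ the boolean context gains a new assumption. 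In these cases the hypothesis $\Sigma;\vB\temd T'_1\tyleq T_1$ is first weakened to the enlarged context via $\mbox{\bf(\regrule-var-thin)}$ or $\mbox{\bf(\regrule-bool-thin)}$, respectively, before the induction hypothesis is invoked.

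I expect the main obstacle to be bookkeeping rather than conceptual: confirming in the $\mbox{\bf(\tyrule-var')}$ case that the transitivity condition applies and that the reconstructed derivation is genuinely a single node so that the heights coincide, and verifying in the context-extending cases that the subtyping hypothesis can indeed be thinned. The conceptual heart lies entirely in the variable case, where the combined rule $\mbox{\bf(\tyrule-var')}$ is what makes height preservation possible.
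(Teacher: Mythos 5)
Your proposal is correct and follows essentially the same route as the paper: structural induction on the derivation, where the only interesting case is $\mbox{\bf(\tyrule-var')}$, handled by collapsing the new hypothesis $\Sigma;\vB\temd T'_1\tyleq T_1$ and the rule's side constraint into a single application of $\mbox{\bf(\tyrule-var')}$ via transitivity of $\tyleq$, exactly as the paper does. Your additional bookkeeping (the $y\neq x$ subcase and the thinning of the subtyping hypothesis via $\mbox{\bf(\regrule-var-thin)}$ and $\mbox{\bf(\regrule-bool-thin)}$ in the context-extending rules) only makes explicit details the paper leaves implicit.
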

\begin{proof}
The proof follows from structural induction on $\Der$ immediately.
The only interesting case is the one where the last applied rule is
$\mbox{\bf(\tyrule-var')}$, and this case can be handled by simply
merging two consecutive applications of the rule
$\mbox{\bf(\tyrule-var')}$ into one (with the help of the regularity
condition stating that $\tyleq$ is transitive).
\hfill
\end{proof}

Given $\Sigma,\vB,\Delta_1,\Delta_2$ and $\theta$, the judgement
$\Sigma;\vB;\Delta_1\tpjg\theta:\Delta_2$ means that the typing judgement
$\Sigma;\vB;\Delta_1\tpjg\theta(x):\Delta_2(x)$ is derivable for each
$x\in\dom(\theta)=\dom(\Delta_2)$.
\begin%
{lemma}%
[Substitution in $\ATSzero$]\label{lemma:substitution}
Assume $\Der::\Sigma;\vB;\Delta\tpjg e:T$ in $\ATSzero$.
\begin{enumerate}
\item
If $\vB=\vB_1,\vB_2$ and $\Sigma;\vB_1\temd\vB_2$ holds, then
$\Sigma;\vB_1;\Delta\tpjg e:T$ is also derivable, where
$\Sigma;\vB_1\temd\vB_2$ means $\Sigma;\vB_1\temd B$ holds for each
$B\in\vB_2$.
\item
If $\Sigma=\Sigma_1,\Sigma_2$ and $\Sigma_1\tpjg\Theta:\Sigma_2$ holds,
then $\Sigma_1;\vB[\Theta];\Delta[\Theta]\tpjg d[\Theta]:T[\Theta]$ is
also derivable.
\item
If $\Delta=\Delta_1,\Delta_2$ and $\Sigma;\vB;\Delta_1\tpjg\theta:\Delta_2$ is
derivable, then $\Sigma;\vB;\Delta_1\tpjg d[\theta]:T$ is also
derivable.
\end{enumerate}
\end{lemma}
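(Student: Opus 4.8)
The plan is to prove the three parts by three separate structural inductions on the derivation $\Der::\Sigma;\vB;\Delta\tpjg e:T$, in each case performing a case analysis on the last typing rule applied. The parts are independent, in the sense that each inductive step invokes only the induction hypothesis of the \emph{same} part, but they share a common template, so I would present that template once and then highlight the few rule-specific cases. Before starting any of the inductions, I would isolate two closure properties of the constraint relation $\temd$, since these are exactly what the constraint premises of rules such as \textbf{(ty-dcx)}, \textbf{(ty-$\Bimp$-elim)} and \textbf{(ty-$\Band$-intr)} demand; the sorting premises of \textbf{(ty-dcx)}, \textbf{(ty-$\forall$-elim)} and \textbf{(ty-$\exists$-intr)} will be discharged directly by Proposition~\ref{prop:static-subst}.

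For part (1) the needed fact about $\temd$ is a \emph{cut} property: if $\Sigma;\vB_1\temd B'$ holds for every $B'\in\vB_2$ and $\Sigma;\vB_1,\vB_2\temd B$ holds, then $\Sigma;\vB_1\temd B$ holds. I would derive this by induction on the length of $\vB_2$, peeling off one hypothesis $B'$ at a time: thin $\Sigma;\vB_1\temd B'$ up to $\Sigma;\vB_1,\vB_2'\temd B'$ by \textbf{(reg-bool-thin)} (where $\vB_2'$ is the remainder) and then apply \textbf{(reg-cut)}. For part (2) the needed fact is a \emph{substitution} property for $\temd$: if $\Sigma_1,\Sigma_2;\vB\temd B$ and $\Sigma_1\tpjg\Theta:\Sigma_2$, then $\Sigma_1;\vB[\Theta]\temd B[\Theta]$. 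Because sorts carry no variable dependencies, the simultaneous substitution $[\Theta]$ factors into single-variable substitutions in any order, so I would obtain this by iterating \textbf{(reg-subst)} over $\dom(\Sigma_2)$, weakening each sorting premise $\Sigma_1\tpjg\Theta(a):\sigma$ up to the intermediate context at each step.

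The inductions themselves then follow the template. In part (1) the non-binding rules thread $\vB$ unchanged into their subderivations, so the induction hypothesis applies and the constraint premises are discharged by the cut property; the binding rules \textbf{(ty-$\Bimp$-intr)}, \textbf{(ty-$\Band$-elim)} and \textbf{(ty-$\exists$-elim)} extend $\vB$ to $\vB,B$, and there I would first thin $\Sigma;\vB_1\temd\vB_2$ to $\Sigma;\vB_1,B\temd\vB_2$ (treating the boolean context up to permutation) before invoking the hypothesis on the enlarged context. For parts (2) and (3) the interesting cases are the rules that bind a fresh variable: \textbf{(ty-$\forall$-intr)} and \textbf{(ty-$\exists$-elim)} for $\Sigma$, and \textbf{(ty-lam)}, \textbf{(ty-$\Band$-elim)} and \textbf{(ty-$\exists$-elim)} for $\Delta$. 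In each such case I would use the implicit freshness side condition, renaming if necessary so the binder neither meets $\dom(\Theta)$ (resp.\ $\dom(\theta)$) nor occurs free in the substituends; extend the substitution by the identity on the bound variable; weaken the substitution typing to the enlarged context; exchange the freshly bound variable past $\Sigma_2$ (resp.\ $\Delta_2$) so it lands in the kept portion; apply the induction hypothesis; and reapply the rule, using freshness to commute $[\Theta]$ (resp.\ $[\theta]$) with the binder. For part (3) the only further point is \textbf{(ty-var')}: when $x\in\dom(\Delta_2)$ we have $e[\theta]=\theta(x)$ with $\Sigma;\vB;\Delta_1\tpjg\theta(x):\Delta_2(x)$ by hypothesis, and I would recover the target $T'$ by one use of \textbf{(ty-sub)} with $\Sigma;\vB\temd T\tyleq T'$; when $x\in\dom(\Delta_1)$ the term is unchanged and \textbf{(ty-var')} reapplies verbatim.

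I expect the genuinely delicate step to be coordinating the context extensions with the substitution in the binder cases of parts (2) and (3): one must arrange, via freshness and context exchange, that the induction hypothesis is applied with the bound variable relocated into the kept portion of the context while the substitution acts only on the portion being eliminated, and then check that $[\Theta]$ (resp.\ $[\theta]$) really does commute past the binder. The remaining work—commuting substitutions with non-binding constructors and the routine context bookkeeping—is mechanical. A secondary matter of discipline is that the two $\temd$-closure properties above must be established \emph{before} the induction, since the typing rules interleave typing premises (handled by the induction hypothesis) with constraint and sorting premises (handled by those closure properties and Proposition~\ref{prop:static-subst}); conflating the two kinds of premise would obscure the otherwise clean structural induction. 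I do not expect to need Lemma~\ref{lemma:var_eq} here, as the only subtyping required, in part (3), is supplied directly by \textbf{(ty-sub)}.
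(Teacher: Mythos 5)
Your proposal is correct and takes essentially the same approach as the paper: the paper's entire proof is the single sentence ``By structural induction on the derivation $\Der$,'' and your three structural inductions---together with the derived cut and substitution closure properties of $\temd$, the use of Proposition~\ref{prop:static-subst} for sorting premises, and the freshness/exchange bookkeeping in the binder cases---are exactly the details that one-line proof leaves implicit. The only caveat is your appeal to permutation of the boolean context (and exchange in $\Sigma$, $\Delta$), which the stated regularity rules do not literally provide; but this is a bureaucratic convention the paper itself silently assumes, and you flag it appropriately.
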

\begin{proof}
By structural induction on the derivation $\Der$.
\hfill
\end{proof}

\begin%
{lemma}%
[Canonical Forms]\label{lemma:canonical_forms}
Assume $\Der::\emptyset;\emptyset;\emptyset\tpjg v:T$. Then the
following statements hold:
\begin{enumerate}
\item If $T=T_1\tand T_2$, then $v$ is of the form $\dtuple{v_1,v_2}$.
\item If $T=T_1\timp T_2$, then $v$ is of the form $\dlam{x}{e}$.
\item If $T=B\Band T_0$, then $v$ is of the form $\dassert{v_0}$.
\item If $T=B\Bimp T_0$, then $v$ is of the form $\diguard{e}$.
\item If $T=\forall a:\sigma.T_0$, then $v$ is of the form $\dslam{a}{e}$.
\item If $T=\exists a:\sigma.T_0$, then $v$ is of the form $\dstuple{s,v_0}$.
\item
If $T=\scc(\vec{s}_1)$, then $v$ is of the form $\dcc\{\vec{s}_2\}(\vec{v})$
for some $\dcc$ associated with $\scc$.
\end{enumerate}
\end{lemma}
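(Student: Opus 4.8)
The plan is to argue by induction on the structure of the derivation $\Der$, with a case analysis on the last rule applied. Because $v$ is a value and all three contexts are empty, most rules are immediately excluded: the elimination rules $\mbox{\bf(\tyrule-fst)}$, $\mbox{\bf(\tyrule-snd)}$, $\mbox{\bf(\tyrule-app)}$, $\mbox{\bf(\tyrule-$\Bimp$-elim)}$, $\mbox{\bf(\tyrule-$\Band$-elim)}$ and $\mbox{\bf(\tyrule-$\exists$-elim)}$ each conclude a typing whose subject has a non-value outer form (an application, a projection, a guard-elimination, or a $\mathbf{let}$), and $\mbox{\bf(\tyrule-var')}$ cannot apply since $\Delta=\emptyset$ offers no variable to look up. Hence the last rule is either one of the six introduction rules, the rule $\mbox{\bf(\tyrule-dcx)}$ instantiated at a constructor $\dcc$, or the subsumption rule $\mbox{\bf(\tyrule-sub)}$.

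When the last rule is an introduction rule or $\mbox{\bf(\tyrule-dcx)}$, the outer form of $v$ and the head type constructor of $T$ are both exhibited directly by the rule, and they match the corresponding clause of the lemma. Since the six logical constructors $\tand,\timp,\Band,\Bimp,\forall_\sigma,\exists_\sigma$ are pairwise distinct and distinct from the base type constructors ranged over in clause~7, exactly one clause can apply to any constructor-headed $T$, so no clash arises. For example, $\mbox{\bf(\tyrule-tup)}$ gives $v=\dtuple{v_1,v_2}$ with $T=T_1\tand T_2$ for clause~1, while $\mbox{\bf(\tyrule-dcx)}$ gives $v=\dcc\{\vec{s}\}(\vec{v})$ with $T=\scc(\ldots)$ for the $\scc$ associated with $\dcc$, discharging clause~7.

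The subtle case is $\mbox{\bf(\tyrule-sub)}$, which concludes $\emptyset;\emptyset;\emptyset\tpjg v:T$ from a strictly shorter derivation of $\emptyset;\emptyset;\emptyset\tpjg v:S$ together with $\emptyset;\emptyset\temd S\tyleq T$; here I must push the assumed shape of $T$ back onto $S$ before the induction hypothesis becomes applicable. I would first record, by a secondary induction on the same derivation, that the type of any closed value is headed by a type constructor, i.e. has the form $\scc_0(\vec{s})$: an introduction rule or $\mbox{\bf(\tyrule-dcx)}$ produces such a head directly, and $\mbox{\bf(\tyrule-sub)}$ preserves it by the very step about to be described, so the two inductions may be run together. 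Writing $S=\scc_0(\vec{s})$, regularity condition~9 applied to $S\tyleq T$ forces $T=\scc_0(\vec{s}\,')$; thus the head of $T$ is exactly $\scc_0$. Consequently, whenever $T$ matches one of the seven clauses, $S$ carries the same head constructor, and the induction hypothesis on the subderivation of $v:S$ delivers the required outer form of $v$ (e.g. for $T=\forall a:\sigma.T_0$ we obtain $\scc_0=\forall_\sigma$, hence $S=\forall a:\sigma.S_0$ and $v=\dslam{a}{e}$).

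I expect this last case to be the main obstacle. Its real content is the backward exploitation of condition~9, which as stated only propagates a head constructor from subtype to supertype; turning it into the statement that $S$ must share the head of $T$ depends entirely on the auxiliary fact that $S$ is constructor-headed in the first place. That fact in turn leans on the emptiness of $\Sigma$, which rules out any closed type of sort $\stype$ reducing to a variable-headed neutral term and thereby guarantees a genuine constructor at the head. Once this is secured, every clause collapses to a single appeal to the induction hypothesis, and no regularity condition other than~9 is required for this particular lemma.
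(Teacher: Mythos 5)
Your proposal follows essentially the same route as the paper's own proof: structural induction on $\Der$, with the introduction rules and $\mbox{\bf(\tyrule-dcx)}$ discharged by direct inspection and the $\mbox{\bf(\tyrule-sub)}$ case pushed through the regularity conditions of Definition~\ref{ATS0_constraint_def} before invoking the induction hypothesis on the immediate subderivation. The one ingredient you make explicit --- the simultaneous invariant that the type of a closed value is constructor-headed, so that condition~9 (which only propagates a head from subtype to supertype) can be turned around to force $S$ to share the head of $T$ --- is precisely the detail the paper's terse proof leaves implicit (its remark following Definition~\ref{ATS0_constraint_def}, that $T'\tyleq T_1\timp T_2$ forces $T'$ to be an arrow, tacitly assumes the same fact), so your write-up is a careful filling-in of the same argument rather than a different one.
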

\begin{proof}
With Definition~\ref{ATS0_constraint_def}, the lemma follows from
structural induction on $\Der$.  If the last applied rule in $\Der$ is
$\mbox{\bf(\tyrule-sub)}$, then the proof goes through by invoking the
induction hypothesis on the immediate subderivation of
$\Der$. Otherwise, the proof follows from a careful inspection of the
typing rules in Figure~\ref{figure:typing_rules_for_ATS0_dynamics}.
\hfill
\end{proof}

In order to assign (call-by-value) dynamic semantics to the dynamic
terms in $\ATSzero$, let us introduce evaluation contexts as follows:
\[\begin{array}{lrcl}
\mbox{eval. ctx.} & E & ::= & \hbox to 212pt{\hss} \\
~~~~~~\hbox to 0pt{$[] \mid \dcx\{\vec{s}\}(\vec{v},E,\vec{e}) \mid \dtuple{E,d} \mid \dtuple{v,E} \mid \dapp{E}{e} \mid \dapp{v}{E} \mid$\hss} &&&\\
~~~~~~\hbox to 0pt{$\deguard{E} \mid \dassert{E} \mid \dletin{\dassert{x}=E}{e} \mid \dsapp{E}{s} \mid \dstuple{s,E} \mid \dsletin{\dstuple{a,x}=E}{e}$\hss} &&&\\
\end{array}\]
\begin%
{definition}\label{ATS0_redex_def}
The redexes and their reducts are defined as follows.
\begin%
{itemize}
\item
$\dfst(\dtuple{v_1,v_2})$ is a redex, and its reduct is $v_1$.
\item
$\dsnd(\dtuple{v_1,v_2})$ is a redex, and its reduct is $v_2$.
\item
$\dapp{\dlam{x}{e}}{v}$ is a redex, and its reduct is $\subst{v}{x}{e}$.
\item
$\dcf\{\vec{s}\}(\vec{v})$ is a redex if it is defined to equal some
value $v$; if so, its reduct is $v$.
\item
$\deguard{\diguard{e}}$ is a redex, and its reduct is $e$.
\item
$\dsapp{\dslam{a}{e}}{s}$ is a redex, and its reduct is $\subst{s}{a}{e}$.
\item
$\dletin{\dassert{x}=\dassert{v}}{e}$ is a redex, and its reduct is
$\subst{v}{x}{e}$.
\item
$\dsletin{\dstuple{a,x}=\dstuple{s,v}}{e}$ is a redex, and its reduct is
$\subst{v}{x}{\subst{s}{a}{e}}$.
\end{itemize}
Given two dynamic terms $e_1$ and $e_2$ such that $e_1=E[e]$ and
$e_2=E[e']$ for some redex $e$ and its reduct $e'$, $e_1$ is said to reduce
to $e_2$ in one step and this one-step reduction is denoted by $e_1\eval
e_2$. Let $\meval$ stand for the reflexive and transitive closure of
$\eval$.
\end{definition}
It is assumed that the type assigned to each dynamic constant function
$\dcf$ is appropriate, that is, $\emptyset;\emptyset;\emptyset\tpjg
v:T$ is derivable whenever
$\emptyset;\emptyset;\emptyset\tpjg\dcf\{\vec{s}\}(v_1,\ldots,v_n):T$
is derivable and $v$ is a reduct of $\dcf\{\vec{s}\}(v_1,\ldots,v_n)$.

\begin%
{lemma}%
[Inversion]\label{lemma:inversion}
Assume $\Der::\Sigma;\vB;\Delta\tpjg e:T$ in $\ATSzero$.
\begin{enumerate}


\item
If $e=\dtuple{e_1,e_2}$,
then there exists $\Der'::\Sigma;\vB;\Delta\tpjg e:T$ such that
$\height{\Der'}\leq\height{\Der}$ and the last rule applied in $\Der'$ is
$\mbox{\bf(\tyrule-tup)}$.

\item
If $e=\dlam{x}{e_1}$,
then there exists $\Der'::\Sigma;\vB;\Delta\tpjg e:T$ such that
$\height{\Der'}\leq\height{\Der}$ and the last applied rule in $\Der'$ is
$\mbox{\bf(\tyrule-lam)}$.

\item
If $e=\diguard{e_1}$,
then there exists $\Der'::\Sigma;\vB;\Delta\tpjg e:T$ such that
$\height{\Der'}\leq\height{\Der}$ and the last rule applied in $\Der'$ is
$\mbox{\bf(\tyrule-$\Bimp$-intr)}$.

\item
If $e=\dassert{e_1}$,
then there exists $\Der'::\Sigma;\vB;\Delta\tpjg e:T$ such that
$\height{\Der'}\leq\height{\Der}$ and the last rule applied in $\Der'$ is
$\mbox{\bf(\tyrule-$\Band$-intr)}$.

\item
If $e=\dslam{a}{e_1}$,
then there exists $\Der'::\Sigma;\vB;\Delta\tpjg e:T$ such that
$\height{\Der'}\leq\height{\Der}$, and the last rule applied in $\Der'$ is
$\mbox{\bf(\tyrule-$\forall$-intr)}$.

\item
If $e=\dstuple{s,e_1}$,
then there exists $\Der'::\Sigma;\vB;\Delta\tpjg e:T$ such that
$\height{\Der'}\leq\height{\Der}$, and the last rule applied in $\Der'$ is
$\mbox{\bf(\tyrule-$\exists$-intr)}$.

\end{enumerate}
\end{lemma}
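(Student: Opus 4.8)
The plan is to prove all six statements simultaneously by structural induction on $\Der$, exploiting the fact that the typing rules of Figure~\ref{figure:typing_rules_for_ATS0_dynamics} are syntax-directed apart from $\mbox{\bf(\tyrule-sub)}$. For each of the six term forms in question, an inspection of the rules shows that the only rules whose conclusion can carry that form are the corresponding introduction rule and $\mbox{\bf(\tyrule-sub)}$ (note that $\mbox{\bf(\tyrule-var')}$ types only variables $x$). Hence in every item there are exactly two subcases. If the last rule applied in $\Der$ is already the intended introduction rule, I simply take $\Der'=\Der$ and there is nothing to prove, so the entire substance of the argument lies in the subcase where $\Der$ ends with $\mbox{\bf(\tyrule-sub)}$.

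In that subcase $\Der$ derives $\Sigma;\vB;\Delta\tpjg e:T'$ from an immediate subderivation $\Der_0::\Sigma;\vB;\Delta\tpjg e:T$ together with $\Sigma;\vB\temd T\tyleq T'$. Since $\height{\Der_0}<\height{\Der}$, I apply the induction hypothesis to $\Der_0$, obtaining $\Der_0'::\Sigma;\vB;\Delta\tpjg e:T$ that ends with the relevant introduction rule and satisfies $\height{\Der_0'}\leq\height{\Der_0}$; this fixes the head shape of $T$ (e.g.\ $T=T_1\tand T_2$ for item~1). The first essential step is then to transport the shape across the subtyping: invoking the head-constructor-preservation condition (condition~9 of Definition~\ref{ATS0_constraint_def}) at the head constructor in question, I conclude that $T'$ is headed by the same constructor (e.g.\ $T'=T'_1\tand T'_2$), which is exactly what makes the target introduction rule applicable at $T'$. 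Without this condition the lemma would in fact be false, since no derivation ending in the introduction rule can conclude at a type of the wrong shape.

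The remaining work is to turn the premises of $\Der_0'$ into premises for the same introduction rule at $T'$, using the variance conditions~3--8 to relate the components, while checking that the height never grows. The purely covariant cases are routine: for $\mbox{\bf(\tyrule-tup)}$ I use condition~3 and prepend one $\mbox{\bf(\tyrule-sub)}$ step to each component subderivation; for $\mbox{\bf(\tyrule-$\forall$-intr)}$ I use condition~7 and one subsumption on the body; and for $\mbox{\bf(\tyrule-$\exists$-intr)}$ I use condition~8 followed by $\mbox{\bf(\regrule-subst)}$ to instantiate $a:=s$ in the component inequality before subsuming. In each of these the single added $\mbox{\bf(\tyrule-sub)}$ sits above the component derivations and below the reintroduced constructor, so the new height is $\height{\Der_0'}+1\leq\height{\Der_0}+1=\height{\Der}$, as required.

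I expect the main obstacle to be the two subcases where the height budget is genuinely tight. The contravariant domain of $\mbox{\bf(\tyrule-lam)}$ (item~2) forces me to change the bound variable's type from $T_1$ to $T'_1$, where condition~4 only gives $T'_1\tyleq T_1$; I cannot insert a subsumption inside the context without losing height control, and this is precisely why Lemma~\ref{lemma:var_eq} is needed, as it rewrites $x:T_1$ to $x:T'_1$ preserving the height of the body, after which one subsumption on the codomain and one $\mbox{\bf(\tyrule-lam)}$ step close the case within budget. The guarded case $\diguard{e_1}$ (item~3) is subtler still: the body of $\Der_0'$ is typed under $\vB,B$, whereas reintroduction at $B'\Bimp T'_0$ demands a body under $\vB,B'$, and condition~6 supplies $\Sigma;\vB,B'\temd B$ and $\Sigma;\vB,B'\temd T\tyleq T'_0$. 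I therefore strengthen the body's boolean context from $\vB,B$ to $\vB,B'$ (legitimate because $\vB,B'$ entails every hypothesis of $\vB,B$) and then subsume to $T'_0$; this strengthening must be a height-preserving refinement of Lemma~\ref{lemma:substitution}(1), justified by the observation that replacing the boolean context by one entailing it alters only the constraint ($\temd$) premises scattered through the derivation and leaves the typing tree, hence its height, untouched. By contrast, the asserting case $\dassert{e_1}$ (item~4) needs no context strengthening: condition~5 yields its relations under the extra hypothesis $B$, but $\Sigma;\vB\temd B$ is already a premise of $\Der_0'$, so $\mbox{\bf(\regrule-cut)}$ discharges $B$ and returns the relations over the original context, whence a single subsumption and one $\mbox{\bf(\tyrule-$\Band$-intr)}$ step suffice.
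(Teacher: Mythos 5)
Your proof is correct and takes essentially the same approach as the paper's: case on whether $\Der$ ends in $\mbox{\bf(\tyrule-sub)}$, apply the induction hypothesis to its subderivation, transfer the type's head shape and the component inequalities via the regularity conditions of Definition~\ref{ATS0_constraint_def}, use Lemma~\ref{lemma:var_eq} for the contravariant $\mbox{\bf(\tyrule-lam)}$ case, and account for heights exactly as the paper does. The paper writes out only items (1) and (2) and declares the remaining cases ``similar''; your elaboration of items (3)--(6) is consistent with that, and you rightly flag that item (3) additionally needs a height-preserving strengthening of the boolean context (a refinement of Lemma~\ref{lemma:substitution}(1)), an auxiliary fact the paper leaves implicit.
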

\begin{proof}
Let $\Der'$ be $\Der$ if
$\Der$ does not end with an application of the rule
$\mbox{\bf(\tyrule-sub)}$. Hence, in the rest of the proof, it can be assumed
that the last applied rule in $\Der$ is $\mbox{\bf(\tyrule-sub)}$, that is,
$\Der$ is of the following form:
\[%
\begin%
{array}{c}
\infer[\mbox{\bf(\tyrule-sub)}]
      {\Sigma; \vB; \Delta \tpjg e:T}
      {\Der_1::\Sigma; \vB; \Delta \tpjg e:T' & \Sigma;\vB \temd T'\tyleq T}
\end{array}\]

Let us prove (1) by induction on $\height{\Der}$.
By induction hypothesis on $\Der_1$, there exists a derivation $\Der'_1::\Sigma; \vB; \Delta
\tpjg e:T'$ such that $\height{\Der'_1}\leq\height{\Der_1}$ and
the last applied rule in $\Der'_1$ is $\mbox{\bf(\tyrule-tup)}$:
\[%
\begin%
{array}{c}
\infer[\mbox{\bf(\tyrule-tup)}]
      {\Sigma;\vB;\Delta\tpjg\dtuple{e_1,e_2}:T'_1\tand T'_2}
      {\Der'_{21}::\Sigma; \vB; \Delta\tpjg e_1: T'_1 &
       \Der'_{22}::\Sigma; \vB; \Delta\tpjg e_2: T'_2} \\[2pt]
\end{array}\]
where $T'=T'_1\tand T'_2$ and $e=\dtuple{e_1,e_2}$.
By one of the regularity condition,
$T=T_1\tand T_2$ for some $T_1$ and $T_2$.
By another regularity condition,
both
$\Sigma;\vB\temd T'_1\tyleq T_1$
and
$\Sigma;\vB\temd T'_2\tyleq T_2$ hold.
By applying
$\mbox{\bf(\tyrule-sub)}$ to $\Der'_{21}$, one obtains
$\Der_{21}::\Sigma; \vB; \Delta\tpjg e_1: T_1$.
By applying
$\mbox{\bf(\tyrule-sub)}$ to $\Der'_{22}$, one obtains
$\Der_{22}::\Sigma; \vB; \Delta\tpjg e_2: T_2$.
Let $\Der'$ be
\[%
\begin%
{array}{c}
\infer[\mbox{\bf(\tyrule-tup)}]
      {\Sigma;\vB;\Delta\tpjg\dtuple{e_1,e_2}:T_1\tand T_2}
      {\Der_{21}::\Sigma; \vB; \Delta\tpjg e_1: T_1 &
       \Der_{22}::\Sigma; \vB; \Delta\tpjg e_2: T_2} \\[2pt]
\end{array}\]
and the proof for (1) is done since
$\height{\Der'}=1+\max(\height{\Der_{21}}, \height{\Der_{22}})$,
which equals
$1+1+\max(\height{\Der'_{21}}, \height{\Der'_{22}})=1+\height{\Der'_1}\leq 1+\height{\Der_1}=\height{\Der}$.

Let us prove (2) by induction on $\height{\Der}$.
By induction hypothesis on $\Der_1$, there exists a derivation $\Der'_1::\Sigma; \vB; \Delta
\tpjg e:T'$ such that $\height{\Der'_1}\leq\height{\Der_1}$ and
the last applied rule in $\Der'_1$ is $\mbox{\bf(\tyrule-lam)}$:
\[%
\begin%
{array}{c}
\infer[\mbox{\bf(\tyrule-lam)}]
      {\Sigma;\vB;\Delta\tpjg\dlam{x}{e_1}:T'_1\timp T'_2}
      {\Der'_2::\Sigma; \vB; \Delta,x:T'_1 \tpjg e_1:T'_2} \\[2pt]
\end{array}\]
where $T'=T'_1\timp T'_2$ and $e=\dlam{x}{e_1}$.
By one of the regularity conditions,
$T=T_1\timp T_2$ for some $T_1$ and $T_2$.
By another regularity condiditon,
both
$\Sigma;\vB\temd T_1\tyleq T'_1$
and
$\Sigma;\vB\temd T'_2\tyleq T_2$ hold.
Hence, by
Lemma~\ref{lemma:var_eq},
there is a derivation $\Der''_2 ::\Sigma; \vB;
\Delta,x:T_1 \tpjg e_1:T'_2$ such that
$\height{\Der''_2}=\height{\Der'_2}$. Let $\Der'$ be the following
derivation,
\[\begin{array}{c}
\infer[\mbox{\bf(\tyrule-lam)}]
      {\Sigma;\vB;\Delta\tpjg \dlam{x}{e_1}: T_1\timp T_2}
      {\infer[\mbox{\bf(\tyrule-sub)}]
             {\Sigma;\vB;\Delta, x:T_1\tpjg e_1:T_2}
             {\Der''_2::\Sigma;\vB;\Delta, x:T_1\tpjg e_1:T'_2 &
              \Sigma;\vB\temd T'_2\tyleq T_2}}
\end{array}\]
and the proof for (2) is done since
$\height{\Der'}=1+1+\height{\Der''_2}=1+1+\height{\Der'_2}=1+\height{\Der'_1}\leq 1+\height{\Der_1}=\height{\Der}$.

The rest of statements (3), (4), (5), and (6) can all be proven similarly.
\hfill
\end{proof}

\begin%
{theorem}%
[Subject Reduction in $\ATSzero$]
\label{theorem:subject_reduction_in_ATS0}
Assume $\Der::\Sigma;\vB;\Delta\tpjg e:T$ in $\ATSzero$ and $e\eval e'$
holds. Then $\Sigma;\vB;\Delta\tpjg e':T$ is also derivable in $\ATSzero$.
\end{theorem}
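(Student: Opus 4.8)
The plan is to exploit the evaluation-context decomposition of one-step reduction. Since $e\eval e'$ holds, Definition~\ref{ATS0_redex_def} supplies an evaluation context $E$, a redex $e_0$, and a reduct $e_0'$ of $e_0$ with $e=E[e_0]$ and $e'=E[e_0']$. The argument then factors into a \emph{replacement} step, showing that typing is preserved when the hole of $E$ is refilled by a subterm of the same type, and a \emph{redex-preservation} step, showing that a redex and its reduct receive the same type under the same context.

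First I would establish replacement: if $\Sigma;\vB;\Delta\tpjg E[e_0]:T$ is derivable, then there is a type $T_0$ with $\Sigma;\vB;\Delta\tpjg e_0:T_0$ derivable, and moreover $\Sigma;\vB;\Delta\tpjg E[e_0']:T$ is derivable for every $e_0'$ satisfying $\Sigma;\vB;\Delta\tpjg e_0':T_0$. The observation that makes this clean is that none of the evaluation contexts places the hole underneath a binder (in $\dletin{\dassert{x}=E}{e}$ and $\dsletin{\dstuple{a,x}=E}{e}$ the hole sits in the definiens, not the body), so the hole is always typed under the very same $\Sigma;\vB;\Delta$. I would prove this by induction on the height of the given derivation. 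When $E=[]$, take $T_0=T$. Otherwise, if the last rule is $\mbox{\bf(\tyrule-sub)}$, I would peel it off, apply the induction hypothesis to the immediate subderivation at the smaller type, and reattach $\mbox{\bf(\tyrule-sub)}$ using transitivity of $\tyleq$. If the last rule is not $\mbox{\bf(\tyrule-sub)}$, it must match the head constructor of $E$ (e.g.\ $\mbox{\bf(\tyrule-app)}$ for $E=\dapp{E'}{e}$, $\mbox{\bf(\tyrule-dcx)}$ for $E=\dcx\{\vec{s}\}(\vec{v},E',\vec{e})$, $\mbox{\bf(\tyrule-$\Band$-elim)}$ for $E=\dletin{\dassert{x}=E'}{e}$), and I would apply the induction hypothesis to the premise typing the hole-containing subterm and reapply the same rule with the remaining premises unchanged.

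Next I would carry out redex-preservation by case analysis on $e_0$. For each elimination redex I would first strip any trailing $\mbox{\bf(\tyrule-sub)}$, recording the resulting $T'\tyleq T$ to be reinstated at the end, and then invoke Lemma~\ref{lemma:inversion} on the introduction-form value inside the redex. For $\dfst(\dtuple{v_1,v_2})$ and $\dsnd(\dtuple{v_1,v_2})$, inversion on the pair yields the component types directly. For $\dapp{\dlam{x}{e_1}}{v}$, inversion on the abstraction gives $\Sigma;\vB;\Delta,x:T_1\tpjg e_1:T_2$ together with $v:T_1$, and Lemma~\ref{lemma:substitution}(3) yields $\subst{v}{x}{e_1}:T_2$. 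For $\deguard{\diguard{e_1}}$, inversion gives $\Sigma;\vB,B;\Delta\tpjg e_1:T$ while the elimination premise supplies $\Sigma;\vB\temd B$, so Lemma~\ref{lemma:substitution}(1) discharges $B$. For $\dsapp{\dslam{a}{e_1}}{s}$, inversion on the static abstraction gives $\Sigma,a:\sigma;\vB;\Delta\tpjg e_1:T_0$, and Lemma~\ref{lemma:substitution}(2) with the static substitution $[a\mapsto s]$ produces $\subst{s}{a}{e_1}:\subst{s}{a}{T_0}$. The two let-redexes $\dletin{\dassert{x}=\dassert{v}}{e_1}$ and $\dsletin{\dstuple{a,x}=\dstuple{s,v}}{e_1}$ combine these ingredients: after inverting the asserting or existential value, I would apply the relevant parts of Lemma~\ref{lemma:substitution} in turn (for the second redex, part~(2) for $a\mapsto s$ followed by part~(3) for $x\mapsto v$, using that $a$ does not occur free in $\vB$, $\Delta$, or $T_2$). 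The constant-function redex $\dcf\{\vec{s}\}(\vec{v})$ is handled directly by the stated assumption that such functions are assigned appropriate types.

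Finally I would combine the two steps: from $\Sigma;\vB;\Delta\tpjg E[e_0]:T$, replacement extracts $T_0$ with $e_0:T_0$; redex-preservation gives $e_0':T_0$; and replacement then delivers $\Sigma;\vB;\Delta\tpjg E[e_0']:T$, that is, $\Sigma;\vB;\Delta\tpjg e':T$. I expect the main obstacle to lie in the correct interaction with the subsumption rule $\mbox{\bf(\tyrule-sub)}$: it can appear at any node, so both the replacement induction and every redex case must invert typing through $\tyleq$. This is precisely where Lemma~\ref{lemma:inversion}, which pushes $\mbox{\bf(\tyrule-sub)}$ past each introduction form, and the regularity conditions on $\tyleq$ in Definition~\ref{ATS0_constraint_def} do the real work; keeping the bookkeeping of these subsumption steps consistent across the two let-redexes is the most delicate part.
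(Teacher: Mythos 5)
Your proposal is correct and is in substance the same as the paper's proof: both decompose $e$ as $E[e_0]$, handle subsumption by peeling off and reattaching $\mbox{\bf(\tyrule-sub)}$ (via transitivity of $\tyleq$), invoke Lemma~\ref{lemma:inversion} on the introduction form inside each redex, and finish each redex case with the appropriate parts of Lemma~\ref{lemma:substitution}. The only difference is packaging: the paper absorbs the $E\neq[]$ congruence cases into a single induction on $\height{\Der}$, whereas you isolate them in a standalone replacement lemma and keep the main theorem itself induction-free --- the content of your two lemmas corresponds exactly to the paper's two cases.
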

\begin{proof}
The proof proceeds by induction on $\height{\Der}$.
\begin{itemize}
\item
The last applied rule in $\Der$ is $\mbox{\bf(\tyrule-sub)}$:
\[%
\begin%
{array}{c}
\infer[]
      {\Sigma;\vB;\Delta\tpjg e:T}
      {\Der_1::\Sigma;\vB;\Delta\tpjg e:T' & \Sigma\temd T'\tyleq T}
\end{array}\]
By induction hypothesis on $\Der_1$, $\Der'_1::\Sigma;\vB;\Delta\tpjg
e':T'$ is derivable, and thus the following derivation is obtained:
\[\begin{array}{c}
\infer[]
      {\Sigma;\vB;\Delta\tpjg e':T}
      {\Der'_1::\Sigma;\vB;\Delta\tpjg e':T' & \Sigma\temd T'\tyleq T}
\end{array}\]
\item
The last applied rule in $\Der$ is not $\mbox{\bf(\tyrule-sub)}$.
Assume that $e=E[e_0]$ and $e'=E[e'_0]$, where $e_0$ is a redex and
$e'_0$ is a reduct of $e_0$. All the cases where $E$ is not $[]$ can
be readily handled, and some details are given as follows on the case
where $E=[]$ (that is, $e$ is itself a redex).
\begin{itemize}
\item
$\Der$ is of the following form:
\[%
\begin%
{array}{c}
\infer[\mbox{\bf(\tyrule-fst)}]
      {\Sigma;\vB;\Delta\tpjg\dfst(\dtuple{v_{11},v_{12}}): T_1}
      {\Der_1::\Sigma;\vB;\Delta\tpjg \dtuple{v_{11},v_{12}}:T_1\tand T_2}
\end{array}\]
where $T=T_1$ and $e=\dfst(\dtuple{v_{11},v_{12}})$.
By Lemma~\ref{lemma:inversion}, $\Der_1$ may be assumed to be
of the following form:
\[%
\begin%
{array}{c}
\infer[\mbox{\bf(\tyrule-tup)}]
      {\Sigma;\vB;\Delta\tpjg\dtuple{v_{11},v_{12}}: T_1\tand T_2}
      {\Der_{21}::\Sigma;\vB;\Delta\tpjg v_{11}: T_1 & \Der_{22}::\Sigma;\vB;\Delta\tpjg v_{12}: T_2}
\end{array}\]
Note that $e'=v_{11}$, and the case concludes.
\item
$\Der$ is of the following form:
\[\begin{array}{c}
\infer[\mbox{\bf(\tyrule-app)}]
      {\Sigma;\vB;\Delta\tpjg\dapp{\dlam{x}{e_1}}{v_2}: T_2}
      {\Der_1::\Sigma;\vB;\Delta\tpjg \dlam{x}{e_1}:T_1\timp T_2 & \Der_2::\Sigma;\vB;\Delta\tpjg v_2:T_1}
\end{array}\]
where $T=T_2$ and $e=\dapp{\dlam{x}{e_1}}{v_2}$.
By Lemma~\ref{lemma:inversion}, $\Der_1$ may be assumed to be
of the following form:
\[%
\begin%
{array}{c}
\infer[]
      {\Sigma;\vB;\Delta\tpjg\dlam{x}{e_1}: T_1\timp T_2}
      {\Sigma;\vB;\Delta,x:T_1\tpjg e_1:T_2}
\end{array}\]
By Lemma~\ref{lemma:substitution} (Substitution),
$\Sigma;\vB;\Delta\tpjg \subst{v_2}{x}{e_1}:T_2$ is derivable.  Note
that $e'=\subst{v_2}{x}{e_1}$, and the case concludes.
\end{itemize}
All of the other cases can be handled similarly.
\end{itemize}
\hfill
\end{proof}

For a less involved presentation,
let us assume that any well-typed closed value of the form
$\dcf\{\vec{s}\}(v_1,\ldots,v_n)$ is a redex, that is, the dynamic constant
function $\dcf$ is well-defined at the arguments $v_1,\ldots,v_n$.
\begin%
{theorem}%
[Progress in $\ATSzero$]
\label{theorem:progress_in_ATS0}
Assume that $\Der::\emptyset;\emptyset;\emptyset\tpjg e:T$ in $\ATSzero$.
Then either $e$ is a value or $e\eval e'$ holds for some dynamic term $e'$.
\end{theorem}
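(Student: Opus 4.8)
The plan is to argue by induction on $\height{\Der}$, analysing the last rule applied in $\Der$. Because all three contexts are empty, the variable rule $\mbox{\bf(\tyrule-var')}$ cannot conclude the derivation (there is no $x$ with $\emptyset(x)=T$), so that case is vacuous. The subsumption rule $\mbox{\bf(\tyrule-sub)}$ is immediate: its premise $\Der_1::\emptyset;\emptyset;\emptyset\tpjg e:T'$ has strictly smaller height and concerns the very same term $e$, so the induction hypothesis on $\Der_1$ already yields that $e$ is a value or that $e\eval e'$ for some $e'$---which is all the conclusion asks of $e$, the type being immaterial. In every remaining case the last rule is forced by the outermost constructor of $e$, and I would split these into the \emph{introduction/constructor} cases and the \emph{elimination} cases.

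For the introduction and constructor forms I would run the standard left-to-right scan of the immediate subterms, each of which the relevant premise types under the same empty contexts (so the induction hypothesis applies to it). In $\mbox{\bf(\tyrule-tup)}$ with $e=\dtuple{e_1,e_2}$: if $e_1$ reduces then $e$ reduces inside the evaluation context $\dtuple{[],e_2}$; if $e_1$ is a value $v_1$ but $e_2$ reduces, then $e$ reduces inside $\dtuple{v_1,[]}$; and if both are values then $e$ is already a value. The rules $\mbox{\bf(\tyrule-lam)}$, $\mbox{\bf(\tyrule-$\Bimp$-intr)}$ and $\mbox{\bf(\tyrule-$\forall$-intr)}$ build a value outright, whereas $\mbox{\bf(\tyrule-$\Band$-intr)}$ and $\mbox{\bf(\tyrule-$\exists$-intr)}$ either reduce in their single subterm position (via $\dassert{[]}$ or $\dstuple{s,[]}$) or become values once that subterm is a value. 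The constructor instance of $\mbox{\bf(\tyrule-dcx)}$, where $\dcx=\dcc$, is the same scan over the argument sequence: either the leftmost non-value argument reduces inside $\dcx\{\vec{s}\}(\vec{v},[],\vec{e})$, or all arguments are values and $\dcc\{\vec{s}\}(\vec{v})$ is itself a value.

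The elimination cases are the crux, and here I would lean on Lemma~\ref{lemma:canonical_forms} (Canonical Forms). Each such rule has a principal premise typing the eliminated subterm under the same empty contexts, so both the induction hypothesis and the closed-value form of Canonical Forms apply to it. Take $\mbox{\bf(\tyrule-fst)}$ with $e=\dfst(e_0)$ and $e_0:T_1\tand T_2$: by the induction hypothesis either $e_0$ reduces---whence $e$ reduces by congruence in the projection position---or $e_0$ is a value, and then Canonical Forms forces $e_0=\dtuple{v_1,v_2}$, so $e=\dfst(\dtuple{v_1,v_2})$ is a redex by Definition~\ref{ATS0_redex_def}. The cases $\mbox{\bf(\tyrule-snd)}$, $\mbox{\bf(\tyrule-app)}$, $\mbox{\bf(\tyrule-$\Bimp$-elim)}$, $\mbox{\bf(\tyrule-$\forall$-elim)}$, $\mbox{\bf(\tyrule-$\Band$-elim)}$ and $\mbox{\bf(\tyrule-$\exists$-elim)}$ repeat this pattern verbatim: reduce the principal subterm in its evaluation context if it is not yet a value; otherwise read its shape off the matching clause of Canonical Forms---$\dtuple{v_1,v_2}$ for $\mbox{\bf(\tyrule-snd)}$, and $\dlam{x}{e}$, $\diguard{e}$, $\dslam{a}{e}$, $\dassert{v}$, $\dstuple{s,v}$ for the remaining five---each of which matches the left-hand side of a redex in Definition~\ref{ATS0_redex_def}. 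The function instance of $\mbox{\bf(\tyrule-dcx)}$ with all arguments already values is discharged by the standing assumption that any well-typed closed $\dcf\{\vec{s}\}(v_1,\ldots,v_n)$ is a redex.

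I expect the main difficulty to be bookkeeping rather than conceptual. The one point that genuinely must be checked is that each elimination rule types its principal subterm under the \emph{empty} contexts---this is what licenses invoking both the induction hypothesis and the closed-value reading of Canonical Forms at that subterm---and, for the congruence steps, that the eliminated position really is an evaluation-context position. The latter amounts to matching each elimination frame against the grammar of evaluation contexts, in particular the projection frames $\dfst([])$ and $\dsnd([])$ needed for the two projection cases. The only ingredient beyond Canonical Forms is the well-definedness assumption on $\dcf$, which is exactly what forbids a primitive application from being stuck at arguments of the correct types; without it, Progress would simply fail for the $\mbox{\bf(\tyrule-dcx)}$ function case.
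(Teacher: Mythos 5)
Your proof is correct and is essentially the paper's own argument: the paper disposes of Progress in a single line (``With Lemma~\ref{lemma:canonical_forms} (Canonical Forms), the proof proceeds by a straightforward structural induction on $\Der$''), and your case analysis---vacuous variable case, the induction hypothesis passing through subsumption, congruence scans for the introduction and constructor forms, Canonical Forms plus Definition~\ref{ATS0_redex_def} for the eliminations, and the well-definedness assumption for $\dcf$---is exactly what that line abbreviates. Your closing remark about the projection frames is especially well taken: the paper's displayed grammar of evaluation contexts actually omits $\dfst([])$ and $\dsnd([])$ (evidently a typographical slip, since Progress would otherwise fail for $\dfst(e_0)$ with $e_0$ a non-value), so the check you single out exposes a small gap in the paper's definitions rather than any gap in your own argument.
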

\begin{proof}
With Lemma~\ref{lemma:canonical_forms} (Canonical Forms), the proof
proceeds by a straightforward structural induction on $\Der$.  \hfill
\end{proof}

By Theorem~\ref{theorem:subject_reduction_in_ATS0} and
Theorem~\ref{theorem:progress_in_ATS0}, it is clear that for each
closed well-typed dynamic term $e$, $e\meval v$ holds for some value
$v$, or there is an infinite reduction sequence starting from $e$:
$e=e_0\eval e_1\eval e_2\eval\cdots$. In other words, the evaluation
of a well-typed program in $\ATSzero$ either reaches a value or goes
on forever (as it can never get stuck). This meta-property of
$\ATSzero$ is often referred to as its type-soundness. Per Robin Milner, a
catchy slogan for type-soundness states that {\em a well-typed program
  can never go wrong}.

\def\tyerase#1{\parallel\kern-1.75pt#1\kern-1.5pt\parallel}
\def\pferase#1{|#1|}
\def\pferasep#1{\pferase{#1}_p}
\def\pferaset#1{\pferase{#1}_t}
\begin%
{figure}[thp]
\[%
\begin%
{array}{rcl}
\tyerase{x} & = & x \\
\tyerase{\dcx\{\vec{s}\}(e_1,\ldots,e_n)} & = & \dcx(\tyerase{e_1},\ldots,\tyerase{e_n}) \\
\tyerase{\dlam{x}{e}} & = & \dlam{x}{\tyerase{e}} \\
\tyerase{\dapp{e_1}{e_2}} & = & \dapp{\tyerase{e_1}}{\tyerase{e_2}} \\
\tyerase{\;\diguard{e}} & = & \tyerase{e} \\
\tyerase{\;\deguard{e}} & = & \tyerase{e} \\
\tyerase{\dassert{e}} & = & \tyerase{e} \\
\tyerase{\dsletin{\dassert{x}=e_1}{e_2}} & = & \dletin{x=\tyerase{e_1}}{\tyerase{e_2}} \\
\tyerase{\dslam{a}{e}} & = & \tyerase{e} \\
\tyerase{\dsapp{e}{s}} & = & \tyerase{e} \\
\end{array}
\]
\caption%
{The type-erasure function $\tyerase{\cdot}$ on dynamic terms}
\label{figure:ATS0_type_erasure}
\end{figure}
After a program in $\atslang$ passes type-checking, it goes through a
process referred to as type-erasure to have the static terms inside it
completely erased. In Figure~\ref{figure:ATS0_type_erasure}, a
function performing type-erasure is defined, which maps each dynamic
term in $\ATSzero$ to an untyped dynamic term in $\lamdyn$.

In order to guarantee that a value in $\ATSzero$ is mapped to
another value in $\lamdyn$ by the function $\tyerase{\cdot}$, the following
syntactic restriction is needed:
\begin%
{itemize}
\item
Only when $e$ is a value can the dynamic term $\diguard{e}$ be formed.
\item
Only when $e$ is a value can the dynamic term $\dslam{a}{e}$ be formed.
\end{itemize}
This kind of restriction is often referred to as value-form restriction.

\begin%
{proposition}\label{prop:type-erasure-0}
With the value-form restriction being imposed, $\tyerase{v}$ is a value in
$\lamdyn$ for every value $v$ in $\ATSzero$.
\end{proposition}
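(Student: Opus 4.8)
The plan is to proceed by structural induction on the grammar of $\ATSzero$ values $v$ given in Figure~\ref{figure:syntax_for_ATS0_dynamics}, showing in each case that $\tyerase{v}$ conforms to the value grammar of $\lamdyn$. For the cases where the erasure function of Figure~\ref{figure:ATS0_type_erasure} is homomorphic and preserves the outermost constructor, the result will follow immediately from the induction hypothesis applied to the value subterms.

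First I would dispatch the base case $v = x$, where $\tyerase{x} = x$ is a $\lamdyn$ value by definition. The structure-preserving cases come next. For $v = \dcc\{\vec{s}\}(v_1,\ldots,v_n)$, erasure yields $\dcc(\tyerase{v_1},\ldots,\tyerase{v_n})$; the induction hypothesis makes each $\tyerase{v_i}$ a value, so the result matches the $\lamdyn$ form $\dcc(\vec{v})$. The case $v = \dtuple{v_1,v_2}$ is handled identically, using the expected homomorphic clause (omitted from Figure~\ref{figure:ATS0_type_erasure} but understood) that sends $\dtuple{e_1,e_2}$ to $\dtuple{\tyerase{e_1},\tyerase{e_2}}$. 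For $v = \dlam{x}{e}$, erasure gives $\dlam{x}{\tyerase{e}}$, which is a $\lamdyn$ value outright; notably, no appeal to the induction hypothesis on the body $e$ is needed here, since a $\lamdyn$ abstraction is a value regardless of the form of its body.

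The genuinely interesting cases, and the only ones where the value-form restriction does any real work, are the erased forms $v = \diguard{e}$ and $v = \dslam{a}{e}$. In both, erasure discards the guard-introduction and static-abstraction wrappers entirely, returning $\tyerase{e}$. Since the value grammar of $\lamdyn$ contains no corresponding constructor to absorb a non-value body, the only way to guarantee that $\tyerase{e}$ is a $\lamdyn$ value is to know that $e$ is already an $\ATSzero$ value, so that the induction hypothesis applies. This is precisely what the value-form restriction supplies: $\diguard{e}$ and $\dslam{a}{e}$ may be formed only when $e$ is a value. By contrast, the remaining value forms $v = \dassert{v_0}$ and $v = \dstuple{s,v_0}$ already carry a value subterm by the grammar itself, so their erasures $\tyerase{v_0}$ are values by the induction hypothesis without invoking any extra hypothesis.

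I expect the main obstacle to be one of bookkeeping rather than mathematical depth: one must read off the three grammars and the erasure clauses consistently, in particular instantiating correctly the homomorphic clauses for $\dtuple{\cdot,\cdot}$ and $\dstuple{s,\cdot}$ that are elided from Figure~\ref{figure:ATS0_type_erasure}. Once the correspondence is pinned down, every case closes by the induction hypothesis, with the $\diguard{\cdot}$ and $\dslam{a}{\cdot}$ cases being the sole places that genuinely consume the value-form restriction.
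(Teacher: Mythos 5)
Your proof is correct and takes exactly the approach of the paper, whose entire proof is the single line ``by structural induction on $v$''; your case analysis is the expected expansion of that induction, including the key observation that the value-form restriction is what makes the $\diguard{e}$ and $\dslam{a}{e}$ cases go through. The handling of the clauses elided from Figure~\ref{figure:ATS0_type_erasure} (homomorphic for $\dtuple{\cdot,\cdot}$, erasing the static component for $\dstuple{s,\cdot}$) is also the intended reading.
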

\begin%
{proof}
By structural induction on $v$.
\hfill
\end{proof}
Note that it is certainly possible to have a non-value $e$ in
$\ATSzero$ whose type-erasure is a value in $\lamdyn$.  From this
point on, the value-form restriction is always assumed to have been
imposed when type-erasure is performed.

\begin%
{proposition}\label{prop:type-erasure-1}
Assume that $e_1$ is a well-typed closed dynamic term in $\ATSzero$.
If $e_1\eval e_2$ holds, then either $\tyerase{e_1}=\tyerase{e_2}$ or
$\tyerase{e_1}\eval\tyerase{e_2}$ holds in $\lamdyn$.
\end{proposition}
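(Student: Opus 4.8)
The plan is to reduce the given reduction, which by definition decomposes as $e_1 = E[e_0] \eval E[e_0'] = e_2$ for some evaluation context $E$, redex $e_0$ and reduct $e_0'$, to a statement about the single redex step, and then to case-analyze $e_0$. The disjunction in the statement is essential rather than cosmetic: of the eight redex forms of Definition~\ref{ATS0_redex_def}, the two that rearrange purely static material, namely $\deguard{\diguard{e}}$ and $\dsapp{\dslam{a}{e}}{s}$, have erasures that collapse to an \emph{equality}, whereas the remaining six erase to genuine $\lamdyn$ redexes and so give a one-step reduction.

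Before the case analysis I would establish three auxiliary facts, each by a routine induction. First, a \emph{context lemma}: for every $\ATSzero$ evaluation context $E$ there is a $\lamdyn$ evaluation context $\hat{E}$ with $\tyerase{E[e]} = \hat{E}[\tyerase{e}]$ for all $e$, proved by induction on $E$. Here the collapsing context formers $\deguard{\cdot}$, $\dassert{\cdot}$, $\dsapp{\cdot}{s}$ and $\dstuple{s,\cdot}$ simply vanish (leaving the hole in place), both $\dletin{\dassert{x}=\cdot}{e}$ and $\dsletin{\dstuple{a,x}=\cdot}{e}$ become $\dletin{x=\cdot}{\tyerase{e}}$, and the remaining formers keep their shape; this uses Proposition~\ref{prop:type-erasure-0} to ensure that a value sitting beside the hole (as in $\dapp{v}{\cdot}$ or $\dtuple{v,\cdot}$) erases to a $\lamdyn$ value, so that $\hat{E}$ is a legal evaluation context. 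Second, a \emph{dynamic substitution lemma} $\tyerase{\subst{v}{x}{e}} = \subst{\tyerase{v}}{x}{\tyerase{e}}$. Third, a \emph{static-invisibility lemma} $\tyerase{\subst{s}{a}{e}} = \tyerase{e}$, reflecting that a static substitution only rewrites the erased annotations $\vec{s}$ inside $\dcx\{\vec{s}\}(\cdots)$ and the types, none of which survive erasure.

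With the context lemma in hand it suffices, after placing everything inside $\hat{E}$, to treat the case $E = []$, i.e.\ to prove $\tyerase{e_0} \eval \tyerase{e_0'}$ or $\tyerase{e_0} = \tyerase{e_0'}$; either conclusion lifts through $\hat{E}$ to the desired relation between $\tyerase{e_1}$ and $\tyerase{e_2}$. For the projection, application, assertion-let and existential-let redexes the erasure is directly a $\lamdyn$ redex: for instance $\tyerase{\dapp{\dlam{x}{e}}{v}} = \dapp{\dlam{x}{\tyerase{e}}}{\tyerase{v}}$ reduces to $\subst{\tyerase{v}}{x}{\tyerase{e}}$, which equals $\tyerase{\subst{v}{x}{e}}$ by the substitution lemma; the forms $\dletin{\dassert{x}=\dassert{v}}{e}$ and $\dsletin{\dstuple{a,x}=\dstuple{s,v}}{e}$ both erase to $\dletin{x=\tyerase{v}}{\tyerase{e}}$ and are handled the same way after noting that the static witness is dropped. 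For the two collapsing redexes the erasure lands on the equality side: $\tyerase{\deguard{\diguard{e}}} = \tyerase{e}$ with reduct $e$, and $\tyerase{\dsapp{\dslam{a}{e}}{s}} = \tyerase{e} = \tyerase{\subst{s}{a}{e}}$ by the static-invisibility lemma.

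The step I expect to require the most care is the constant-function redex $\dcf\{\vec{s}\}(\vec{v})$. Here I would need the natural companion to the well-definedness hypothesis already imposed on each $\dcf$: that the reduction of constant functions commutes with erasure, i.e.\ whenever $v$ is a reduct of $\dcf\{\vec{s}\}(\vec{v})$ in $\ATSzero$ then $\tyerase{v}$ is a reduct of $\dcf(\tyerase{\vec{v}})$ in $\lamdyn$. Granting this, that case yields a one-step reduction and the proof is complete. A secondary point to pin down is that $\lamdyn$ reduces $\dletin{x=v}{e}$ to $\subst{v}{x}{e}$, which is exactly what the two let-forms require; this is the expected call-by-value let rule, and I would record it explicitly at the point of use.
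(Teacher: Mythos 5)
Your proposal is correct and is essentially the paper's own proof written out in full: the paper disposes of this proposition with ``by a careful inspection of the forms of redexes in Definition~\ref{ATS0_redex_def},'' and that inspection is exactly your case analysis, with the two static-only redexes ($\deguard{\diguard{e}}$ and $\dsapp{\dslam{a}{e}}{s}$) collapsing to equality and the remaining forms erasing to genuine $\lamdyn$ steps. The scaffolding you make explicit --- the evaluation-context lemma, the two substitution lemmas, the erasure assumption on constant functions, and the call-by-value let-redex in $\lamdyn$ --- is precisely what the paper's one-line proof leaves implicit, so there is no divergence of approach to report.
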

\begin%
{proof}
By a careful inspection of the forms of redexes in Definition~\ref{ATS0_redex_def}.
\hfill
\end{proof}

\begin%
{proposition}\label{prop:type-erasure-2}
Assume that $e_1$ is a well-typed closed dynamic term in $\ATSzero$.
If $\tyerase{e_1}\eval e'_2$ holds in $\lamdyn$, then there exists
$e_2$ such that $e_1\meval e_2$ holds in $\ATSzero$ and
$\tyerase{e_2}=e'_2$.
\end{proposition}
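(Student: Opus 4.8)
The plan is to prove the statement by strong induction on a size measure of $e_1$, using Progress and Subject Reduction to drive reduction in $\ATSzero$ while tracking the effect on the erasure via Proposition~\ref{prop:type-erasure-1}. Concretely, let $\mu(e)$ denote the number of dynamic-term constructors occurring in $e$, so that purely static subterms (which get erased) contribute nothing. First I would observe that $\tyerase{e_1}$ is not a value, since it admits the reduction $\tyerase{e_1}\eval e'_2$ and values are normal forms; by the contrapositive of Proposition~\ref{prop:type-erasure-0}, $e_1$ is therefore not a value either, so Theorem~\ref{theorem:progress_in_ATS0} (Progress) yields some $e_1''$ with $e_1\eval e_1''$.

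Now Proposition~\ref{prop:type-erasure-1} applies to this canonical step and splits into two cases. In the visible case $\tyerase{e_1}\eval\tyerase{e_1''}$, the redex position in $\tyerase{e_1}$ is uniquely fixed by the evaluation context, so $\tyerase{e_1''}$ and $e'_2$ are reducts at the same position; for all redexes other than a $\dcf$-application the reduct is unique, whence $\tyerase{e_1''}=e'_2$ and taking $e_2=e_1''$ gives $e_1\meval e_2$ with $\tyerase{e_2}=e'_2$. (For the one nondeterministic redex, a $\dcf$-application, I would instead locate the corresponding $\ATSzero$-redex directly and select the reduct matching $e'_2$, relying on the assumed compatibility of each $\dcf$ with erasure; this still yields a single $\ATSzero$ step landing at the required $e_2$.) In the invisible case $\tyerase{e_1}=\tyerase{e_1''}$, the step leaves the erasure unchanged, so $\tyerase{e_1''}\eval e'_2$ still holds; since $e_1''$ is closed and, by Theorem~\ref{theorem:subject_reduction_in_ATS0} (Subject Reduction), well-typed, I would apply the induction hypothesis to $e_1''$, obtaining $e_2$ with $e_1''\meval e_2$ and $\tyerase{e_2}=e'_2$, and then prepend $e_1\eval e_1''$ to conclude $e_1\meval e_2$. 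This invisible case is exactly the ``expose-then-contract'' pattern, where $\ATSzero$ must first discharge some administrative markers before a genuine computational redex surfaces.

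The step I expect to be the main obstacle is justifying the well-foundedness of this recursion, i.e. showing the invisible case cannot repeat forever. The key is to classify precisely which reductions preserve the erasure. Inspecting Definition~\ref{ATS0_redex_def}, every redex except guard elimination $\deguard{\diguard{e}}\eval e$ and static application $\dsapp{\dslam{a}{e}}{s}\eval\subst{s}{a}{e}$ induces a genuine $\lamdyn$ step and hence changes the erasure; these two are therefore the only erasure-preserving (administrative) reductions. Both strictly decrease $\mu$: each deletes two dynamic-term constructors, and in the static-application case the substitution $[a\mapsto s]$ rewrites only static subterms and so introduces no new dynamic constructors. Hence $\mu(e_1'')<\mu(e_1)$ in the invisible case, while the visible case terminates at once, and strong induction on $\mu$ is justified.

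Two routine ingredients support the argument and I would record them first: that $\tyerase{\cdot}$ is compositional with respect to evaluation contexts, so that $\tyerase{r}=\tyerase{s}$ for a redex/reduct pair forces $\tyerase{E[r]}=\tyerase{E[s]}$; and the auxiliary identity $\tyerase{\subst{s}{a}{e}}=\tyerase{e}$, expressing that static substitution is invisible to erasure. Both follow by straightforward structural inductions and are precisely what is needed to pin down the administrative reductions and their effect on $\mu$ in the termination argument.
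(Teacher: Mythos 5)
Your overall strategy is genuinely different from the paper's: the paper disposes of this proposition by induction on the height of the typing derivation for $e_1$, whereas you run a simulation argument driven by Theorem~\ref{theorem:progress_in_ATS0}, Theorem~\ref{theorem:subject_reduction_in_ATS0} and Proposition~\ref{prop:type-erasure-1}, with an explicit termination measure $\mu$. Most of this is sound, but there is a genuine gap at precisely the step you identify as the crux, namely the classification of erasure-preserving reductions. Your justification is the inference ``every redex except guard elimination and static application induces a genuine $\lamdyn$ step and hence changes the erasure.'' This is a non sequitur: a one-step reduction in $\lamdyn$ need not change the term, because $\eval$ is not irreflexive. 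The standard counterexample is $\Omega=\dapp{\omega}{\omega}$ with $\omega=\dlam{x}{\dapp{x}{x}}$, for which $\Omega\eval\Omega$. Accordingly, a $\beta$-redex in $\ATSzero$ whose erasure has this shape satisfies $\tyerase{e_1}=\tyerase{e_1''}$ even though it is not one of your two administrative redexes, and for such a step $\mu$ need not decrease (for a pure $\Omega$ one even has $e_1''=e_1$), so the recursion makes no progress. Since the two cases supplied by Proposition~\ref{prop:type-erasure-1} are not mutually exclusive, your invisible-case argument must be valid under the sole hypothesis $\tyerase{e_1}=\tyerase{e_1''}$, and it is not. Nor is it routine to rule out such redexes for \emph{well-typed closed} terms: the constraint relation of $\ATSzero$ is abstract, so one would have to prove, from the regularity conditions alone, that no well-typed term has an $\Omega$-shaped redex at its active position --- a normalization-style fact that your write-up does not address.

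The gap is avoidable by restructuring the case split: instead of casing on the dichotomy of Proposition~\ref{prop:type-erasure-1}, case on the form of the redex contracted in the step $e_1\eval e_1''$ obtained from Progress. If it is $\deguard{\diguard{e}}$ or $\dsapp{\dslam{a}{e}}{s}$, the erasure is preserved (by your two auxiliary identities) and $\mu$ strictly decreases, so the induction hypothesis applies. Otherwise the redex is computational, and then --- regardless of whether $\tyerase{e_1''}$ happens to coincide with $\tyerase{e_1}$ --- the erasure of the redex is a $\lamdyn$-redex sitting at the unique active position of $\tyerase{e_1}$; by unique decomposition and uniqueness of reducts (with your $\dcf$-compatibility proviso for the one nondeterministic case) its contraction must agree with the given step, so $\tyerase{e_1''}=e'_2$ can be arranged and a single $\ATSzero$ step suffices. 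With this repair your argument goes through, and it remains a genuinely different, and considerably more detailed, route than the paper's one-line induction on typing derivations.
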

\begin%
{proof}
By induction on the height of the typing derivation for $e_1$.
\hfill
\end{proof}

By Proposition~\ref{prop:type-erasure-1} and
Proposition~\ref{prop:type-erasure-2}, it is clear that type-erasure
cannot alter the dynamic semantics of a well-typed dynamic term in
$\ATSzero$.

The formulation of $\ATSzero$ presented in this section is of a
minimalist style. In particular, the constraint relation in $\ATSzero$
is treated abstractly. In practice, if a concrete instance of
$\ATSzero$ is to be implemented, then rules need to be provided for
simplifying constraints. For instance, the following rule may be present:
\[
\infer
{\Sigma;\vB\temd\tint(I_1)\tyleq\tint(I_2)}{\Sigma;\vB\temd I_1=I_2}
\]
With this rule, $\tint(I_1)\tyleq\tint(I_2)$ can be simplified to
the constraint $I_1=I_2$, where the equality is on static integer terms.
The following rule may also be present:
\[
\infer
{\Sigma;\vB\temd\tlist(T_1,I_1)\tyleq\tlist(T_2,I_2)}
{\Sigma;\vB\temd T_1\tyleq T_2 & \Sigma;\vB\temd I_1=I_2}
\]
With this rule, $\tlist(T_1,I_1)\tyleq\tlist(T_2,I_2)$ can be
simplified to the two constraints $T_1\tyleq T_2$ and $I_1=I_2$.

For those interested in implementing an applied type system, please
find more details in a paper on $\DML$~\cite{DML-jfp07}, which is
regarded a special kind of applied type system.

\section%
{Formal Development of $\ATSproof$}\label{section:ATSproof}
Let us extend $\ATSzero$ to $\ATSproof$ in this section with support
for programming with theorem-proving (PwTP).

A great limitation on employing $\ATSzero$ as the basis for a
practical programming language lies in the very rigid handling of
constraint-solving in $\ATSzero$. One is often forced to impose
various {\em ad hoc} restrictions on the syntactic form of a
constraint that can actually be supported in practice (so as to match
the capability of the underlying constraint-solver), greatly
diminishing the effectiveness of using types to capture programming
invariants. For instance, only quantifier-free constraints that can be
translated into problems of linear integer programming are allowed in
the DML programming language~\cite{XiWebDML}.

With PwTP being supported in a programming language, programming and
theorem-proving can be combined in a syntactically intertwined
manner~\cite{CPwTP-icfp05}; if a constraint cannot be handled directly
by the underlying constraint-solver, then it is possible to simplify
the constraint or even eliminate it through explicit proof
construction. PwTP advocates an open style of constraint-solving by
providing a means within the programming language itself to allow the
programmer to actively participate in constraint-solving. In other
words, PwTP can be viewed as a programming paradigm for internalizing
constraint-solving.

\def\TP{T^{*}}
\def\prleq{\leq_{pr}}
\begin{figure}[thp]
\[%
\begin%
{array}{ccl}
\prleq %
      & : & (\sprop,\sprop) \Simp \sbool \\
\tand & : & (\sprop,\sprop) \Simp \sprop \\
\tand & : & (\sprop,\stype) \Simp \stype \\
\timp & : & (\sprop,\sprop) \Simp \sprop \\
\timp & : & (\sprop,\stype) \Simp \stype \\
\Band & : & (\sbool,\sprop) \Simp \sprop \\
\Bimp & : & (\sbool,\sprop) \Simp \sprop \\
\forall_\sigma & : & (\sigma\simp\sprop)\Simp\sprop \\
\exists_\sigma & : & (\sigma\simp\sprop)\Simp\sprop \\
\end{array}\]
\caption{Additional static constants in $\ATSproof$}
\label{figure:ATSproof_sconsts}
\end{figure}
Let us now start with the formulation of $\ATSproof$, which extends
that of $\ATSzero$ fairly lightly.  In addition to the base sorts in
$\ATSzero$, $\ATSproof$ contains another base sort $\sprop$, which is
for static terms representing types for proofs. A static term of the
sort $\sprop$ may be referred to as a prop (or, sometimes, a type for
proofs). Also, it is assumed that the static constants listed in
Figure~\ref{figure:ATSproof_sconsts} are included in $\ATSproof$.
Note that the symbols referring to these static constants may be
overloaded. In the following representation, $P$ stands for a prop,
$T$ stands for a type, and $\TP$ stands for either a prop or a type.

\def\dtuplepp#1{\langle#1\rangle_{\it pp}}
\def\dtuplept#1{\langle#1\rangle_{\it pt}}
\def\dtuplett#1{\langle#1\rangle_{\it tt}}
\def\dlampp#1#2{{\bf lam}_{\it pp}\;#1.\;#2}
\def\dlampt#1#2{{\bf lam}_{\it pt}\;#1.\;#2}
\def\dlamtt#1#2{{\bf lam}_{\it tt}\;#1.\;#2}
\def\dapppp#1#2{{\bf app}_{\it pp}(#1,#2)}
\def\dapptp#1#2{{\bf app}_{\it tp}(#1,#2)}
\def\dapptt#1#2{{\bf app}_{\it tt}(#1,#2)}
The syntax for dynamic terms in $\ATSproof$ is essentially the same as
that in $\ATSzero$ but with a few minor changes to be mentioned as
follows.  Some dynamic constructs in $\ATSzero$ need to be split when
they are incorporated into $\ATSproof$.  The construct
$\dtuple{e_1,e_2}$ for forming tuples is split into
$\dtuplepp{e_1,e_2}$, $\dtuplept{e_1,e_2}$, and $\dtuplett{e_1,e_2}$
for prop-type pairs, prop-type pairs and type-type pairs,
respectively. For instance, a prop-type pair is one where the first
component is assigned a prop and the second one a type. Note that
there are no type-prop pairs.  The construct $\dlam{x}{e}$ for forming
lambda-abstractions is split into $\dlampp{x}{e}$, $\dlampt{x}{e}$,
and $\dlamtt{x}{e}$ for prop-prop functions, prop-type functions and
type-type functions, respectively. For instance, a prop-type function
is one where the argument is assigned a prop and the body a type. The
construct $\dapp{e_1}{e_2}$ for forming applications is split into
$\dapppp{e_1}{e_2}$, $\dapptp{e_1}{e_2}$, and $\dapptt{e_1}{e_2}$ for
prop-prop applications, type-prop applications and type-type
applications. For instance, a type-prop application is one where the
function part is assigned a type and the argument a prop.
Note that there are no type-prop functions.

The dynamic variable contexts in $\ATSproof$ are defined as follows:
\[%
\begin%
{array}{lrcl}
\mbox{dynamic var. ctx.} & \Delta & ::= & \emptyset \mid \Delta, x:\TP \\
\end{array}
\]

The regularity conditions on $\tyleq$ needs to be extended with the
following two for the new forms of types:
\begin%
{itemize}
\item[3.2]
$\Sigma;\vB\temd P_1\tand T_2\tyleq P'_1\tand T'_2$ implies
$\Sigma;\vB\temd P_1\prleq P'_1$ and $\Sigma;\vB\temd T_2\tyleq T'_2$.
\item[4.2]
$\Sigma;\vB\temd P_1\timp T_2\tyleq P'_1\timp T'_2$ implies
$\Sigma;\vB\temd P'_1\prleq P_1$ and $\Sigma;\vB\temd T_2\tyleq T'_2$.
\end{itemize}
It should be noted that there are no regularity conditions imposed
on props (as it is not expected for proofs to have any computational
meaning).

There are two kinds of typing rules in $\ATSproof$: p-typing rules and
t-typing rules, where the former is for assigning props to dynamic
terms (encoding proofs) and the latter for assigning types to dynamic
terms (to be evaluated).  The typing rules for $\ATSproof$ are
essentially those for $\ATSzero$ listed in
Figure~\ref{figure:typing_rules_for_ATS0_dynamics} except for the
following changes:
\begin%
{itemize}
\item
Each occurrence of $T$ in the rules for $\ATSzero$ needs to be 
replaced with $\TP$.
\item
The premisses of each p-typing rule (that is, one for assigning a prop
to a dynamic term) are required to be p-typing rules themselves.
\end{itemize}
As an example, let us take a look at the following rule:
\[%
\begin%
{array}{c}
\infer[\mbox{\bf(\tyrule-sub)}]
      {\Sigma; \vB; \Delta \tpjg e:T'}
      {\Sigma; \vB; \Delta \tpjg e:T & \Sigma;\vB \temd T\tyleq T'} \\[2pt]
\end{array}\]
which yields the following two valid versions:
\[%
\begin%
{array}{c}
\infer[\mbox{\bf(\tyrule-sub-p)}]
      {\Sigma; \vB; \Delta \tpjg e:P'}
      {\Sigma; \vB; \Delta \tpjg e:P & \Sigma;\vB \temd P\prleq P'} \\[2pt]
\infer[\mbox{\bf(\tyrule-sub-t)}]
      {\Sigma; \vB; \Delta \tpjg e:T'}
      {\Sigma; \vB; \Delta \tpjg e:T & \Sigma;\vB \temd T\tyleq T'} \\[2pt]
\end{array}\]
As another example, let us take a look at the following rule:
\[%
\begin%
{array}{c}
\infer[\mbox{\bf(\tyrule-fst)}]
      {\Sigma; \vB; \Delta\tpjg \dfst(e):T_1}
      {\Sigma; \vB; \Delta\tpjg e: T_1 * T_2}
\end{array}\]
which yields the following two valid versions:
\[%
\begin%
{array}{c}
\infer[\mbox{\bf(\tyrule-fst-pp)}]
      {\Sigma; \vB; \Delta\tpjg \dfst(e):P_1}
      {\Sigma; \vB; \Delta\tpjg e: P_1 * P_2} \\[2pt]
\infer[\mbox{\bf(\tyrule-fst-tt)}]
      {\Sigma; \vB; \Delta\tpjg \dfst(e):T_1}
      {\Sigma; \vB; \Delta\tpjg e: T_1 * T_2} \\[2pt]
\end{array}\]
Note that there is no type of the form $T_1 \tand P_2$ (for the sake
of simplicity). The following version is invalid:
\[%
\begin%
{array}{c}
\infer[\mbox{\bf(\tyrule-fst-pt)}]
      {\Sigma; \vB; \Delta\tpjg \dfst(e):P_1}{\Sigma; \vB; \Delta\tpjg e: P_1 * T_2}
\end{array}
\]
because a p-typing rule cannot have any t-typing rule as its premise.
Instead, the following typing rule is introduced as the elimination
rule for $P_1*T_2$:
\[%
\begin%
{array}{c}
\infer[\mbox{\bf(\tyrule-$\tand$-elim-pt)}]
      {\Sigma; \vB; \Delta\tpjg \dletin{\dtuplept{x_1,x_2}=e}{e_0}:T_0}
      {\Sigma; \vB; \Delta\tpjg e: P_1 * T_2 & \Sigma;\vB;\Delta,x_1:P_1,x_2:T_2 \tpjg e_0:T_0}
\end{array}
\]
As yet another example, let us take a look at the following rule:
\[%
\begin%
{array}{c}
\infer[\mbox{\bf(\tyrule-app)}]
      {\Sigma; \vB; \Delta\tpjg \dapp{e_1}{e_2}: \TP_2}
      {\Sigma; \vB; \Delta\tpjg e_1: \TP_1\timp \TP_2 & \Sigma; \vB; \Delta\tpjg e_2: \TP_1} \\[2pt]
\end{array}\]
which yields the following three versions:
\[%
\begin%
{array}{c}
\infer[\mbox{\bf(\tyrule-app-pp)}]
      {\Sigma; \vB; \Delta\tpjg \dapppp{e_1}{e_2}: P_2}
      {\Sigma; \vB; \Delta\tpjg e_1: P_1\timp P_2 & \Sigma; \vB; \Delta\tpjg e_2: P_1} \\[2pt]
\infer[\mbox{\bf(\tyrule-app-tp)}]
      {\Sigma; \vB; \Delta\tpjg \dapptp{e_1}{e_2}: T_2}
      {\Sigma; \vB; \Delta\tpjg e_1: P_1\timp T_2 & \Sigma; \vB; \Delta\tpjg e_2: P_1} \\[2pt]
\infer[\mbox{\bf(\tyrule-app-tt)}]
      {\Sigma; \vB; \Delta\tpjg \dapptt{e_1}{e_2}: T_2}
      {\Sigma; \vB; \Delta\tpjg e_1: T_1\timp T_2 & \Sigma; \vB; \Delta\tpjg e_2: T_1} \\[2pt]
\end{array}\]
The first one is a p-typing rule while the other two are t-typing rules.

In $\ATSproof$, the two sorts $\sbool$ and $\sprop$ are intimately
related but are also fundamentally different.  Gaining a solid
understanding of the relation between these two is the key to
understanding the design of $\ATSproof$.  One may see $\sprop$ as an
internalized version of $\sbool$. Given a static boolean term $B$, its
truth value is determined by a constraint-solver outside
$\ATSproof$. Given a static term $P$ of the sort $\sprop$, a proof of
$P$ can be constructed inside $\ATSproof$ to attest to the validity of
the boolean term encoded by $P$. For clarification, let us see a
simple example illustrating the relation between $\sbool$ and $\sprop$
in concrete terms.

\def\punit{\tunit}
\def\fact{{\it fact}}
\def\factp{{\it fact\_p}}
\def\factpbas{\mbox{\tt fact\_p\_bas}}
\def\factpind{\mbox{\tt fact\_p\_ind}}
\def\factb{{\it fact\_b}}
\def\factbbas{\mbox{\tt fact\_b\_bas}}
\def\factbind{\mbox{\tt fact\_b\_ind}}
\def\ffactp{{\it f\_fact\_p}}
\def\ffactb{{\it f\_fact\_b}}
\begin%
{figure}[thp]
\begin%
{minipage}{206pt}
\begin%
{verbatim}
dataprop
fact_p(int, int) =
  | fact_p_bas(0, 1) of ()
  | {n:nat}{r:int}
    fact_p_ind(n+1, (n+1)*r) of fact_p(n, r)
\end{verbatim}
\end{minipage}
\vspace{12pt}
\caption{A dataprop for encoding the factorial function}
\label{figure:dataprop_fact_p}
\end{figure}
\begin%
{figure}[thp]
\begin%
{minipage}{276pt}
\begin%
{verbatim}
stacst
fact_b : (int, int) -> bool
praxi
fact_b_bas
(
  // argless
) : [fact_b(0, 1)] unit_p
praxi
fact_b_ind{n:int}{r:int}
(
  // argless
) : [n >= 0 && fact_b(n, r) ->> fact_b(n+1, (n+1)*r)] unit_p
\end{verbatim}
\end{minipage}
\vspace{12pt}
\caption{A static predicate and two associated proof functions}
\label{figure:stacst_fact_b}
\end{figure}
In Figure~\ref{figure:dataprop_fact_p}, the dataprop $\factp$
declared in $\atslang$ is associated with two proof constructors
that are assigned the following c-types (or, more precisely, c-props):
\[
\begin%
{array}{ccl}
\factpbas &~~:~~& \factp(0, 1) \\
\factpind &~~:~~& \forall n:\snat.\forall r:\sint.~(\factp(n, r))\Timp\factp(n+1, (n+1)*r) \\
\end{array}
\]
Let $\fact(n)$ be the value of the factorial function on $n$,
where $n$ ranges over natural numbers.  Given a natural number $n$ and
an integer $r$, the prop $\factp(n, r)$ encodes the relation $\fact(n)=r$.
In other words, if a proof of the prop $\factp(n, r)$ can be constructed,
then $\fact(n)$ equals $r$.

In Figure~\ref{figure:stacst_fact_b}, a static predicate $\factb$ is
introduced, which corresponds to $\factp$.  Given a natural number $n$
and an integer $r$, $\factb(n, r)$ simply means $\fact(n)=r$. The two
proof functions $\factbbas$ and $\factbind$ are assigned the following
c-props:
\[
\begin%
{array}{ccl}
\factbbas &\kern-6pt:\kern-6pt& ()\Timp\factb(0, 1)\Band\punit \\
\factbind &\kern-6pt:\kern-6pt& \forall n:\sint.\forall r:\sint.~()\Timp(n\geq 0\land\factb(n, r)\limplies\factb(n+1, (n+1)\cdot r))\Band\punit\\
\end{array}
\]
where $\punit$ is the unit prop (instead of the unit type)
that encodes the static truth value $\ttrue$.
Note that the keyword $\mbox{\it praxi}$ in $\atslang$ is used to
introduce proof functions that are treated as axioms.

\begin%
{figure}[thp]
\begin%
{minipage}{252pt}
\begin%
{verbatim}
fun
f_fact_p
  {n:nat}
(
  n: int(n)
) : [r:int]
  (fact_p(n, r) | int(r)) = let
//
fun
loop
{ i:nat
| i <= n
} {r:int}
(
  pf: fact_p(i, r)
| i: int(i), r: int(r)
) : [r:int] (fact_p(n, r) | int(r)) =
  if i < n then
    loop(fact_p_ind(pf) | i+1, (i+1)*r) else (pf | r)
  // end of [if]
//
in
  loop(fact_p_bas() | 0(*i*), 1(*r*))
end // end of [f_fact_p]
\end{verbatim}
\end{minipage}
\vspace{12pt}
\caption{A verified implementation of the factorial function}
\label{figure:f_fact_p}
\end{figure}
\begin%
{figure}[thp]
\begin%
{minipage}{200pt}
\begin%
{verbatim}
fun
f_fact_b
  {n:nat}
(
  n: int(n)
) : [r:int]
  (fact_b(n, r) && int(r)) = let
//
prval() = $solver_assert(fact_b_bas)
prval() = $solver_assert(fact_b_ind)
//
fun
loop
{ i:nat | i <= n}
{ r:int | fact_b(i, r) }
(
  i: int(i), r: int(r)
) : [r:int]
  (fact_b(n, r) && int(r)) =
  if i < n then loop(i+1, (i+1)*r) else (r)
//
in
  loop(0, 1)
end // end of [f_fact_b]
\end{verbatim}
\end{minipage}
\vspace{12pt}
\caption{Another verified implementation of the factorial function}
\label{figure:f_fact_b}
\end{figure}

In Figure~\ref{figure:f_fact_p}, a verified implementation of the
factorial function is given in $\atslang$. Given a natural numbers
$n$, $\ffactp$ returns an integer $r$ paired with a proof of
$\factp(n, r$) that attests to the validity of $\fact(n)=r$. Note that
this implementation makes explicit use of proofs.  The constraints
generated from type-checking the code in Figure~\ref{figure:f_fact_p}
are quantifier-free, and they can be readily solved by the built-in
constraint-solver (based on linear integer programming) for
$\atslang$.

In Figure~\ref{figure:f_fact_b}, another verified implementation of
the factorial function is given in $\atslang$. Given a natural
numbers, $\ffactb$ returns an integer $r$ plus the assertion
$\factb(n, r)$ that states $\fact(n)=r$. This implementation does not
make explicit use of proofs. Applying the keyword
$\mbox{\it\$solver\_assert}$ to a proof turns the prop of the proof
into a static boolean term (of the same meaning) and then adds the
term as an assumption to be used for solving the constraints generated
subsequently in the same scope. For instance, the two applications of
$\mbox{\it\$solver\_assert}$ essentially add the following two
assumptions:
\[
\begin%
{array}{l}
\factb(0, 1) \\
\forall n:\sint.\forall r:\sint.~n\geq 0\land\factb(n, r)\limplies\factb(n+1, (n+1)\cdot r) \\
\end{array}
\]
Note that the second assumption is universally quantified. In general,
solving constraints involving quantifiers is much more difficult than
those that are quantifier-free. For instance, the constraints generated
from type-checking the code in Figure~\ref{figure:f_fact_b} cannot be
solved by the built-in constraint-solver for $\atslang$. Instead,
these constraints need to be exported so that external
constraint-solvers (for instance, one based on the Z3
theorem-prover~\cite{Z3-tacas08}) can be invoked to solve them.

By comparing these two verified implementations of the factorial
function, one sees a concrete case where PwTP (as is done in
Figure~\ref{figure:f_fact_p}) is employed to simplify the constraints
generated from type-checking. This kind of constraint simplification
through PwTP is a form of internalization of constraint-solving, and
it can often play a pivotal r{\^o}le in practice, especially, when
there is no effective method available for solving general
unsimplified constraints.

Instead of assigning (call-by-value) dynamic semantics to the dynamic
terms in $\ATSproof$ directly, a translation often referred to as
proof-erasure is to be defined that turns each dynamic term in
$\ATSproof$ into one in $\ATSzero$ of the same dynamic semantics.

Given a sort $\sigma$, its proof-erasure $\pferase{\sigma}$ is the one
in which every occurrence of $\sprop$ in $\sigma$ is replaced with
$\sbool$.

Given a static variable context $\Sigma$, its proof-erasure
$\pferase{\Sigma}$ is obtained from replacing each declaration
$a:\sigma$ with $a:\pferase{\sigma}$.

For every static constant $\scx$ of the c-sort
$(\sigma_1,\ldots,\sigma_n)\Simp\sigma$, it is assumed that there
exists a corresponding $\scx'$ of the c-sort
$(\pferase{\sigma_1},\ldots,\pferase{\sigma_n})\Simp\pferase{\sigma}$;
this corresponding $\scx'$ may be denoted by $\pferase{\scx}$.
Note that it is possible to have $\pferase{\scx_1}=\pferase{\scx_2}$
for different constants $\scx_1$ and $\scx_2$.

Let us assume the existence of the following static constants:
\[%
\begin%
{array}{ccl}
\land & : & (\sbool, \sbool) \Simp \sbool \\
\limplies & : & (\sbool, \sbool) \Simp \sbool \\
\forall_{\sigma} & : & (\sigma\simp\sbool) \Simp \sbool \\
\exists_{\sigma} & : & (\sigma\simp\sbool) \Simp \sbool \\
\end{array}\]
Note that the symbols referring to these static constants are all
overloaded. Naturally, $\land$ and $\limplies$ are interpreted as the
boolean conjunction and boolean implication, respectively, and
$\forall_{\sigma}$ and $\exists_{\sigma}$ are interpreted as the
standard universal quantification and existential quantification,
respectively.  For instance, some pairs of corresponding static
constants are listed as follows:
\begin%
{itemize}
\item
The boolean implication function $\limplies$
corresponds to the prop predicate $\prleq$.
\item
The boolean implication function $\limplies$
corresponds to the prop constructor
$\timp$ of the c-sort $(\sprop,\sprop)\Simp\sprop$.
\item
The boolean implication function $\limplies$
corresponds to the prop constructor
$\Bimp$ of the c-sort $(\sbool,\sprop)\Simp\sprop$.
\item
The boolean conjunction function $\land$
corresponds to the prop constructor
$~\tand~$ of the c-sort $(\sprop,\sprop)\Simp\sprop$.
\item
The boolean conjunction function $\land$
corresponds to the prop constructor
$\Band$ of the c-sort $(\sbool,\sprop)\Simp\sprop$.
\item
The type constructor $\Band$ of the c-sort
$(\sbool,\stype)\Simp\stype$ corresponds to the type constructor
$\tand$ of the c-sort $(\sprop,\stype)\Simp\stype$.
\item
The type constructor $\Bimp$ of the c-sort $(\sbool,\stype)\Simp\stype$
corresponds to the type constructor $\timp$ of the c-sort
$(\sprop,\stype)\Simp\stype$.
\item
For each sort $\sigma$,
the universal quantifier
$\forall_{\sigma}$ of the sort $(\sigma\simp\sbool)\Simp\sbool$
corresponds to
the universal quantifier
$\forall_{\sigma}$ of the sort $(\sigma\simp\sprop)\Simp\sprop$.
\item
For each sort $\sigma$,
the existential quantifier
$\exists_{\sigma}$ of the sort $(\sigma\simp\sbool)\Simp\sbool$
corresponds to
the existential quantifier
$\exists_{\sigma}$ of the sort $(\sigma\simp\sprop)\Simp\sprop$.
\end{itemize}

For every static term $s$, $\pferase{s}$ is the static term obtained
from replacing in $s$ each $\sigma$ with $\pferase{\sigma}$ and each $\scx$
with $\pferase{\scx}$.
\begin%
{proposition}
Assume that $\Sigma\tpjg s:\sigma$ is derivable.
Then $\pferase{\Sigma}\tpjg \pferase{s}:\pferase{\sigma}$ is also derivable.
\end{proposition}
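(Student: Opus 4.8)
The plan is to proceed by structural induction on the derivation $\Der::\Sigma\tpjg s:\sigma$, following exactly the four sorting rules of Figure~\ref{figure:ATS0_sorting_rules}. Before entering the induction, I would record a few bookkeeping facts about proof-erasure that make each case immediate. First, erasure commutes with the sort arrow: since $\pferase{\cdot}$ only rewrites the base sort $\sprop$ into $\sbool$ and leaves the arrow structure untouched, $\pferase{\sigma_1\simp\sigma_2}=\pferase{\sigma_1}\simp\pferase{\sigma_2}$. Second, erasure commutes with context operations, so that $\pferase{\Sigma}(a)=\pferase{\Sigma(a)}$ and $\pferase{\Sigma,a:\sigma_1}=\pferase{\Sigma},a:\pferase{\sigma_1}$. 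Third, erasure commutes with every term-forming construct, giving $\pferase{\scx(s_1,\ldots,s_n)}=\pferase{\scx}(\pferase{s_1},\ldots,\pferase{s_n})$, $\pferase{\lambda a:\sigma_1.s}=\lambda a:\pferase{\sigma_1}.\pferase{s}$, and $\pferase{s_1(s_2)}=\pferase{s_1}(\pferase{s_2})$. All three facts are read off directly from the definitions of $\pferase{\cdot}$ on sorts, contexts, and terms, without any auxiliary induction.

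With these facts in hand, each inductive case amounts to erasing the concluding rule instance and reapplying the same rule. For $\mbox{\bf(\strule-var)}$, the term is a variable $a$ (unchanged by erasure) and the goal $\pferase{\Sigma}\tpjg a:\pferase{\sigma}$ follows from $\pferase{\Sigma}(a)=\pferase{\sigma}$. For $\mbox{\bf(\strule-scx)}$, I would invoke the standing assumption that every constant $\scx$ of c-sort $(\sigma_1,\ldots,\sigma_n)\Simp b$ has a counterpart $\pferase{\scx}$ of c-sort $(\pferase{\sigma_1},\ldots,\pferase{\sigma_n})\Simp\pferase{b}$; the premiss sortings $\Sigma\tpjg s_i:\sigma_i$ are erased by the induction hypothesis, and reapplying $\mbox{\bf(\strule-scx)}$ yields the conclusion. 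The cases $\mbox{\bf(\strule-lam)}$ and $\mbox{\bf(\strule-app)}$ are handled by applying the induction hypothesis to the premiss derivations, rewriting the erased sorts via the arrow- and context-commutation facts above, and reapplying the corresponding rule.

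There is no real obstacle here; the proposition is a routine ``erasure preserves sorting'' statement, and the only point to watch is that the commutation facts are genuinely definitional rather than requiring a separate argument. The single place deserving a word of care is the constant case, which tacitly relies on the earlier assumption that a corresponding erased constant $\pferase{\scx}$ of the appropriately erased c-sort always exists, together with the observation that distinct constants may erase to the same one; this collapse is harmless, since the induction needs only the \emph{existence} of a valid erased sorting and never any uniqueness of the constant.
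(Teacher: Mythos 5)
Your proposal is correct and takes essentially the same approach as the paper, whose entire proof is the single sentence ``By induction on the sorting derivation of $\Sigma\tpjg s:\sigma$.'' The commutation facts you record and your appeal to the assumed existence of a corresponding constant $\pferase{\scx}$ of the erased c-sort are precisely the routine bookkeeping that the paper's one-line proof leaves implicit.
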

\begin%
{proof}
By induction on the sorting derivation of $\Sigma\tpjg s:\sigma$.
\hfill
\end{proof}

For a sequence $\vB$ of static boolean terms,
$\pferase{\vB}$ is the sequence obtained from applying $\pferase{\cdot}$
to each $B$ in $\vB$.

There are two functions $\pferasep{\cdot}$ and $\pferaset{\cdot}$ for
mapping a given dynamic variable context $\Delta$ to a sequence of
boolean terms and a dynamic variable context, respectively:
\begin%
{itemize}
\item
$\pferasep{\Delta}$ is a sequence of boolean terms $\vB$
such that each $B$ in $\vB$ is $\pferase{P}$ for some $a:P$ declared
in $\Sigma$.
\item
$\pferaset{\Delta}$ is a dynamic variable context such each
declaration in it is of the form $a:\pferase{T}$ for some $a:T$ declared
in $\Sigma$.
\end{itemize}

\begin%
{figure}
\[%
\begin%
{array}{rcl}
\pferase{x} & = & x \\
\pferase{\dcx\{\vec{s}\}(\vec{e})}
                        & = & \dcx\{\pferase{\vec{s}}\}(\pferase{\vec{e}}) \\
\pferase{\dtuplept{e_1,e_2}} %
                        & = & \dassert{\pferase{e_2}} \\
\pferase{\dtuplett{e_1,e_2}} %
                        & = & \dtuplett{\pferase{e_1},\pferase{e_2}} \\
\pferase{\dfst(e)}      & = & \dfst(\pferase{e}) \\
\pferase{\dsnd(e)}      & = & \dsnd(\pferase{e}) \\
\pferase{\dletin{\dtuplept{x_p,x_t}=e_1}{e_2}} %
                        & = & \dletin{\dassert{x_t}=\;\pferase{e_1}}{\pferase{e_2}} \\
\pferase{\dlampt{x}{e}} & = & \diguard{\pferase{e}} \\
\pferase{\dlamtt{x}{e}} & = & \dlam{x}{\!\!\pferase{e}} \\
\pferase{\dapptp{e_1}{e_2}} & = & \deguard{\pferase{e_1}} \\
\pferase{\dapptt{e_1}{e_2}} & = & \dapp{\pferase{e_1}}{\pferase{e_2}} \\
\pferase{\;\diguard{e}} & = & \diguard{\pferase{e}} \\
\pferase{\;\deguard{e}} & = & \deguard{\pferase{e}} \\
\pferase{\dassert{e}}   & = & \dassert{\pferase{e}} \\
\pferase{\dsletin{\dassert{x}=e_1}{e_2}} %
                       & = & \dsletin{\dassert{x}=\;\pferase{e_1}}{\pferase{e_2}} \\
\pferase{\dslam{a}{e}}  & = & \dslam{a}{\pferase{e}} \\
\pferase{\dsapp{e}{s}}  & = & \dsapp{\pferase{e}}{\pferase{s}} \\
\end{array}\]
\caption%
{The proof-erasure function $\pferase{\cdot}$ on dynamic terms}
\label{figure:ATSpf_proof_erasure}
\end{figure}
The proof-erasure function on dynamic terms is defined in
Figure~\ref{figure:ATSpf_proof_erasure}.  Clearly, given a dynamic
term $e$ in $\ATSproof$, $\pferase{e}$ is a dynamic term in $\ATSzero$
if it is defined.

\def\pneg{{\it not}}
As the proof-erasure of $\prleq$
is chosen to be the boolean implication function,
it needs to be assumed that
$\Sigma;\vB\tpjg P_1\prleq P_2$
implies
$\pferase{\Sigma};\pferase{\vB}\tpjg\pferase{P_1}\limplies\pferase{P_2}$
\begin%
{lemma}%
[Constraint Internalization]%
\label{lemma:ATSproof_prop2bool}
Assume that the typing judgment $\Sigma;\vB;\Delta\tpjg e:P$
is derivable in $\ATSproof$.  Then the constraint
$\pferase{\Sigma};\pferase{\vB},\pferasep{\Delta}\temd\pferase{P}$ holds.
\end{lemma}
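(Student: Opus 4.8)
The plan is to proceed by structural induction on the derivation $\Der::\Sigma;\vB;\Delta\tpjg e:P$. Since the conclusion assigns a prop, and since the premisses of every p-typing rule are themselves required to be p-typing, the derivation $\Der$ is built entirely from p-typing rules, together with sorting and constraint side-premisses. It therefore suffices to inspect each p-typing rule in turn and to show that, after proof-erasure, its conclusion is a constraint derivable in the constraint relation of $\ATSzero$ from the constraints that the induction hypothesis supplies for its typing premisses. The guiding principle is that each prop-level connective erases to the corresponding boolean one: $\tand$ and $\Band$ to $\land$, $\timp$ and $\Bimp$ to $\limplies$, the prop quantifiers to the boolean quantifiers, and $\prleq$ to $\limplies$ (this last being precisely the assumption stated just above the lemma), so that each rule corresponds to a standard logical inference.

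First I would record the auxiliary facts relied upon: (i) proof-erasure commutes with static substitution, i.e. $\pferase{P[a\mapsto s]}=\subst{\pferase{s}}{a}{\pferase{P}}$; (ii) erasure preserves sorting, by the preceding proposition, so that $\pferase{\Sigma}\tpjg\pferase{s}:\pferase{\sigma}$; (iii) erasure preserves the constraint relation on boolean premisses, i.e. $\Sigma;\vB\temd B$ implies $\pferase{\Sigma};\pferase{\vB}\temd\pferase{B}$; and (iv) for each proof constant $\dcx$ whose c-prop is $\forall\vec{a}:\vec{\sigma}.~\vB_0\Bimp(P_1,\ldots,P_n)\Timp P$, the erased implication $\forall\vec{a}:\pferase{\vec{\sigma}}.~\pferase{\vB_0}\land\pferase{P_1}\land\cdots\land\pferase{P_n}\limplies\pferase{P}$ holds in the constraint relation; this is the soundness requirement on the proof axioms, and it is the analogue for proof constants of the explicit $\prleq$-assumption. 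Throughout I would also appeal to the structural regularity rules $\mbox{\bf(\regrule-id)}$, $\mbox{\bf(\regrule-var-thin)}$, $\mbox{\bf(\regrule-bool-thin)}$, $\mbox{\bf(\regrule-subst)}$ and $\mbox{\bf(\regrule-cut)}$, together with the fact that the order of boolean assumptions is immaterial.

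With these in hand the cases are routine. For the prop version of $\mbox{\bf(\tyrule-var')}$ the goal follows from $\mbox{\bf(\regrule-id)}$ and the $\prleq$-assumption; for $\mbox{\bf(\tyrule-sub-p)}$ one combines the induction hypothesis with the $\prleq$-assumption and modus ponens. The introduction rules map to logical introductions: $\mbox{\bf(\tyrule-tup)}$ (prop-prop) and $\mbox{\bf(\tyrule-$\Band$-intr)}$ to $\land$-introduction, using thinning to place both conjuncts over a common context; $\mbox{\bf(\tyrule-lam)}$ (prop-prop) and $\mbox{\bf(\tyrule-$\Bimp$-intr)}$ to $\limplies$-introduction, reordering the context so that the discharged hypothesis $\pferase{P_1}$, respectively $\pferase{B}$, is last; $\mbox{\bf(\tyrule-$\forall$-intr)}$ to $\forall$-introduction, whose side-condition is discharged because $a$ is not free in $\vB$ or $\Delta$; and $\mbox{\bf(\tyrule-$\exists$-intr)}$ to $\exists$-introduction, presenting the premiss as $\subst{\pferase{s}}{a}{\pferase{P}}$ via fact (i). Dually, the elimination rules map to logical eliminations: $\mbox{\bf(\tyrule-fst)}$ and $\mbox{\bf(\tyrule-snd)}$ (prop-prop) to $\land$-elimination, $\mbox{\bf(\tyrule-app-pp)}$ and $\mbox{\bf(\tyrule-$\Bimp$-elim)}$ to modus ponens (feeding the erased guard in through fact (iii)), and $\mbox{\bf(\tyrule-$\Band$-elim)}$ and $\mbox{\bf(\tyrule-$\exists$-elim)}$ to the corresponding eliminations, in each case using $\mbox{\bf(\regrule-cut)}$ to splice the formulas extracted from the first premiss into the context of the second. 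The proof-constant rule $\mbox{\bf(\tyrule-dcx)}$ (prop version) is handled by instantiating the axiom of fact (iv) at $\pferase{\vec{s}}$ and discharging its hypotheses with the induction hypotheses on the $e_i$ and with the erased guard-constraints.

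I expect the main obstacle to be conceptual rather than computational: pinning down exactly which properties of the \emph{abstract} constraint relation are needed and arguing they are available. The regularity rules supply only the structural laws (identity, thinning, substitution, cut), so the introduction and elimination behaviour of $\land$, $\limplies$, $\forall$ and $\exists$, and above all the soundness of proof-constant c-props under erasure (fact (iv)), must be isolated as the hypotheses that make the correspondence between proof construction and constraint derivation go through. The one place demanding care with the structural rules is the bookkeeping that arises when the erased dynamic context $\pferasep{\Delta}$ interleaves with the boolean assumptions $\pferase{\vB}$, since $\limplies$- and $\forall$-introduction must be applied to exactly the right formula; but once exchange and thinning are available this is mechanical.
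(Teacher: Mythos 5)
Your proposal is correct and takes essentially the same route as the paper's own proof: structural induction on the typing derivation, with each p-typing rule erased to the corresponding boolean inference ($\tand$ and $\Band$ to $\land$, $\timp$ and $\Bimp$ to $\limplies$, modus ponens for $\mbox{\bf(\tyrule-app-pp)}$), and $\mbox{\bf(\tyrule-sub-p)}$ discharged by the stated assumption that $\prleq$ erases to $\limplies$. The auxiliary facts you isolate---notably the soundness of proof-constant c-props under erasure and the introduction/elimination behaviour of the boolean connectives, which the regularity rules alone do not supply---are exactly the assumptions the paper relies on implicitly (the proof-constant condition is stated in the paper only just after the lemma) for the cases it dismisses as handled similarly.
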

\begin%
{proof}
By structural induction on the typing derivation $\Der$ of
$\Sigma;\vB;\Delta\tpjg e:P$.
Note that the typing rule $\mbox{\bf(\tyrule-sub-p)}$ is handled by the assumption
that $\Sigma;\vB\tpjg P_1\prleq P_2$ implies
$\pferase{\Sigma};\pferase{\vB}\tpjg\pferase{P_1}\limplies\pferase{P_2}$ for any props $P_1$ and $P_2$.
\begin%
{itemize}
\item
Assume that the last applied rule in $\Der$ is $\mbox{\bf(\tyrule-tup-pp)}$:
\[%
\begin%
{array}{c}
\infer[\mbox{\bf(\tyrule-tup-pp)}]
      {\Sigma;\vB;\Delta\tpjg\dtuplepp{e_1,e_2}: P_1\tand P_2}
      {\Der_{1}::\Sigma;\vB;\Delta\tpjg e_1: P_1 & \Der_{2}::\Sigma;\vB;\Delta\tpjg e_2: P_2}
\end{array}\]
where $P=P_1\tand P_2$.
By induction hypothesis on $\Der_1$,
$\pferase{\Sigma};\pferase{\vB},\pferasep{\Delta}\temd\pferase{P_1}$
holds.
By induction hypothesis on $\Der_2$,
$\pferase{\Sigma};\pferase{\vB},\pferasep{\Delta}\temd\pferase{P_2}$
holds.
Note that
$\pferase{P}=\pferase{P_1\tand P_2}=\pferase{P_1}\land\pferase{P_2}$,
where $\land$ stands for the boolean conjunction. Therefore,
$\pferase{\Sigma};\pferase{\vB},\pferasep{\Delta}\temd\pferase{P}$
holds.
\item
Assume that the last applied rule in $\Der$ is
either
$\mbox{\bf(\tyrule-fst-pp)}$
or
$\mbox{\bf(\tyrule-snd-pp)}$.
This case immediately follows from the fact that
$\pferase{P_1*P_2}=\pferase{P_1}\land\pferase{P_2}$ for any props
$P_1$ and $P_2$, where $\land$ stands for the boolean conjunction
\item
Assume that the last applied rule in $\Der$ is $\mbox{\bf(\tyrule-lam-pp)}$:
\[%
\begin%
{array}{c}
\infer[\mbox{\bf(\tyrule-lam-pp)}]
      {\Sigma;\vB;\Delta\tpjg\dlampp{x_1}{e_2}: P_1\timp P_2}
      {\Der_1::\Sigma;\vB;\Delta,x_1: P_1\tpjg e_2: P_2}
\end{array}\]
where $P=P_1\timp P_2$.
By induction hypothesis on $\Der_1$,
$\pferase{\Sigma};\pferase{\vB},\pferasep{\Delta},\pferase{P_1}\temd\pferase{P_2}$
holds.
By the regularity rule
$\mbox{\bf(\regrule-cut)}$,
$\pferase{\Sigma};\pferase{\vB},\pferasep{\Delta}\temd\pferase{P_2}$ holds
whenever
$\pferase{\Sigma};\pferase{\vB},\pferasep{\Delta}\temd\pferase{P_1}$ holds.
Therefore,
$\pferase{\Sigma};\pferase{\vB},\pferasep{\Delta}\temd\pferase{P_1}\limplies\pferase{P_2}$ holds,
where $\limplies$ stands for the boolean implication.
Note that $\pferase{P}=\pferase{P_1}\limplies\pferase{P_2}$, and this case concludes.
\item
Assume that the last applied rule in $\Der$ is $\mbox{\bf(\tyrule-app-pp)}$:
\[%
\begin%
{array}{c}
\infer[\mbox{\bf(\tyrule-app-pp)}]
      {\Sigma; \vB; \Delta\tpjg \dapppp{e_1}{e_2}: P_2}
      {\Der_1::\Sigma; \vB; \Delta\tpjg e_1: P_1\timp P_2 & \Der_2::\Sigma; \vB; \Delta\tpjg e_2: P_1}
\end{array}\]
where $P=P_2$.
By induction hypothesis on $\Der_1$,
$\pferase{\Sigma};\pferase{\vB},\pferasep{\Delta}\temd\pferase{P_1}\limplies\pferase{P_2}$ holds,
where $\limplies$ stands for the boolean implication.
By induction hypothesis on $\Der_2$, the constraint
$\pferase{\Sigma};\pferase{\vB},\pferasep{\Delta}\temd\pferase{P_1}$ holds.
Therefore, the constraint
$\pferase{\Sigma};\pferase{\vB},\pferasep{\Delta}\temd\pferase{P_2}$ also holds.

\end{itemize}
The rest of the cases can be handled similarly.
\hfill
\end{proof}
Note that a proof in $\ATSproof$ can be non-constructive as it is not
expected for the proof to have any computational meaning.  In
particular, one can extend the proof construction in $\ATSproof$ with
any kind of reasoning based on classical logic (e.g., double negation
elimination).

\def\ept{e_{12}}
\def\ctype{\mbox{\it CT}}
If a c-type $\ctype$ assigned to a dynamic (proof) constant is of the
form $\forall\Sigma.\vB\Bimp(\vec{P})\Timp P_0$, then it is assumed
that the following constraint holds in $\ATSzero$:
$$\emptyset;\emptyset\temd\forall\pferase{\Sigma}.\pferase{\vB}\Bimp(\pferase{\vec{P}}\Bimp\pferase{P_0})$$
For instance, the c-types assigned to $\factpbas$ and $\factpind$ imply the validity of the following constraints:
\[%
\begin%
{array}{l}
\emptyset;\emptyset\tpjg\factb(0, 1) \\
\emptyset;\emptyset\tpjg\forall n:\sint.\forall r:\sint.~(n\geq 0\land\factb(n, r)\limplies\factb(n+1, (n+1)\cdot r)) \\
\end{array}\]
which are encoded directly into the c-types assigned to $\factbbas$ and $\factbind$.

If a c-type $\ctype$ is of the form
$\forall\Sigma.\vB\Bimp(\vec{P},T_1,\ldots,T_n)\Timp T_0$,
then $\pferase{\ctype}$ is defined as follows:
$$\forall\pferase{\Sigma}.\pferase{\vB}\Bimp(\pferase{\vec{P}}\Bimp((\pferase{T_1},\ldots,\pferase{T_n})\Timp\pferase{T_0}))$$
If a dynamic constant
$\dcx$ is assigned the c-type $\ctype$ in $\ATSproof$,
then it is assumed to be of the c-type $\pferase{\ctype}$ in $\ATSzero$.
\begin%
{theorem}%
\label{theorem:ATSproof_pferasure}
Assume that $\Sigma;\vB;\Delta\tpjg e:T$ is derivable in
$\ATSproof$. Then
$\pferase{\Sigma};\pferase{\vB},\pferasep{\Delta};\pferaset{\Delta}\tpjg\pferase{e}:\pferase{T}$
is derivable in $\ATSzero$,
\end{theorem}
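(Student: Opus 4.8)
The plan is to argue by structural induction on the derivation $\Der$ of $\Sigma;\vB;\Delta\tpjg e:T$ in $\ATSproof$, using Lemma~\ref{lemma:ATSproof_prop2bool} (Constraint Internalization) as the companion result that turns every proof-level premise into a discharged boolean constraint. Three preliminary facts organize the argument. First, the context-erasure maps behave exactly as the $\ATSzero$ rules demand: adding a type binding $x:T$ to $\Delta$ sends $x:\pferase{T}$ into $\pferaset{\Delta}$ and leaves $\pferasep{\Delta}$ fixed, while adding a prop binding $x:P$ appends $\pferase{P}$ to $\pferasep{\Delta}$ and leaves $\pferaset{\Delta}$ fixed; similarly $\pferase{\Sigma,a:\sigma}=\pferase{\Sigma},a:\pferase{\sigma}$ and $\pferase{\vB,B}=\pferase{\vB},\pferase{B}$. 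Second, since $\pferase{\cdot}$ acts compositionally on static terms, it commutes with static substitution---$\pferase{\subst{s}{a}{T}}=\subst{\pferase{s}}{a}{\pferase{T}}$ and its vector form---and the (immediately preceding) proposition that sorting is preserved under $\pferase{\cdot}$ turns each premise $\Sigma\tpjg s:\sigma$ into $\pferase{\Sigma}\tpjg\pferase{s}:\pferase{\sigma}$. Third, erasure preserves the constraint relation: a valid $\ATSproof$-constraint $\Sigma;\vB\temd B_0$ erases to a valid $\ATSzero$-constraint $\pferase{\Sigma};\pferase{\vB}\temd\pferase{B_0}$, and every subtyping goal $T\tyleq T'$ erases to $\pferase{T}\tyleq\pferase{T'}$; these rest on the assumed implication $P_1\prleq P_2\Rightarrow\pferase{P_1}\limplies\pferase{P_2}$ together with the regularity conditions of Figure~\ref{figure:ATS0_regularity_rules} and its $\ATSproof$-extension (notably the new clauses $3.2$ and $4.2$).

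For the t-typing rules that remain entirely at the type level---$\mbox{\bf(\tyrule-var')}$, $\mbox{\bf(\tyrule-sub-t)}$, the type-type forms of tupling, projection, abstraction and application, and the $\forall$-, $\exists$-, $\Band$- and $\Bimp$-rules carried over from $\ATSzero$---the erased term is headed by the matching $\ATSzero$ construct, and the conclusion follows by reapplying that rule to the induction hypotheses: the substitution identity serves in the $\forall$-elimination and $\exists$-introduction cases, the third preliminary discharges the subtyping side condition of $\mbox{\bf(\tyrule-sub-t)}$, and the $\mbox{\bf(\tyrule-$\Bimp$-intr)}$ and $\mbox{\bf(\tyrule-$\Band$-elim)}$ cases go through modulo the harmless reordering of boolean hypotheses treated in the final paragraph.

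The genuinely new work lies in the four rules where a prop meets a type, and in each the erasure table of Figure~\ref{figure:ATSpf_proof_erasure} converts the prop component into a guard or an assertion. For $\mbox{\bf(\tyrule-tup-pt)}$ the pair $\dtuplept{e_1,e_2}$ erases to $\dassert{\pferase{e_2}}$ of type $\pferase{P_1}\Band\pferase{T_2}$, so I supply the side condition $\pferase{\Sigma};\pferase{\vB},\pferasep{\Delta}\temd\pferase{P_1}$ of $\mbox{\bf(\tyrule-$\Band$-intr)}$ by Constraint Internalization applied to the proof premise $e_1:P_1$, and the body premise by the induction hypothesis on $e_2:T_2$. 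Dually, $\mbox{\bf(\tyrule-app-tp)}$ erases $\dapptp{e_1}{e_2}$ to $\deguard{\pferase{e_1}}$, and the guard that $\mbox{\bf(\tyrule-$\Bimp$-elim)}$ must release is furnished by Constraint Internalization on the proof argument $e_2:P_1$, while the function premise is the induction hypothesis on $e_1:P_1\timp T_2$ (whose erased type is $\pferase{P_1}\Bimp\pferase{T_2}$). The remaining two, $\mbox{\bf(\tyrule-lam-pt)}$ (erasing to $\diguard{\cdot}$) and $\mbox{\bf(\tyrule-$\tand$-elim-pt)}$, go through by the induction hypothesis alone, because moving the prop-bound variable out of $\Delta$ and into the boolean context is precisely the effect of $\pferasep{\cdot}$: the context delivered by the hypothesis already reads $\pferase{\vB},\pferasep{\Delta},\pferase{P_1}$, matching the premise of the target $\ATSzero$ rule on the nose.

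I expect the main obstacle to be the constant-application rule $\mbox{\bf(\tyrule-dcx)}$ when the c-type $\ctype=\forall\Sigma_0.\vB_0\Bimp(\vec{P},T_1,\ldots,T_n)\Timp T_0$ carries genuine prop arguments $\vec{P}$, for this is where the two strands of the development must interlock. Its erasure folds the prop arguments into a second block of guards, $\pferase{\ctype}=\forall\pferase{\Sigma_0}.(\pferase{\vB_0},\pferase{\vec{P}})\Bimp(\pferase{T_1},\ldots,\pferase{T_n})\Timp\pferase{T_0}$, so the crux is to align the term-erasure with this c-type-erasure: the erased application must retain only its type-level arguments, the proof arguments being absorbed into the new guard block and re-discharged as constraints. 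Applying the single $\ATSzero$ rule then requires verifying its entire guard block at once---the constraints from $\pferase{\vB_0}$ coming, by the substitution identity and the third preliminary, from the $\ATSproof$ side conditions $\Sigma;\vB\temd\subst{\vec{s}}{\vec{a}}{B}$, and those from $\pferase{\vec{P}}$ being exactly the output of Constraint Internalization on the proof arguments (each of $\ATSproof$-type $\subst{\vec{s}}{\vec{a}}{P_j}$, whose internalized constraint is already stated over the context $\pferase{\Sigma};\pferase{\vB},\pferasep{\Delta}$)---while the type arguments are dispatched by the induction hypothesis and the result type agrees by the substitution identity. A secondary, purely bookkeeping obstacle surfaces in the inherited rules $\mbox{\bf(\tyrule-$\Bimp$-intr)}$ and $\mbox{\bf(\tyrule-$\Band$-elim)}$, where the hypothesis inserts the new boolean \emph{before} $\pferasep{\Delta}$, as $\pferase{\vB},\pferase{B},\pferasep{\Delta}$, whereas the $\ATSzero$ rule appends it \emph{after}, as $\pferase{\vB},\pferasep{\Delta},\pferase{B}$; this reordering is innocuous because the constraint relation is insensitive to the order and multiplicity of its boolean hypotheses---an admissible exchange/weakening following from the structural rules of Figure~\ref{figure:ATS0_regularity_rules} (notably $\mbox{\bf(\regrule-bool-thin)}$)---so typing is preserved under permuting the boolean context. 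All remaining cases are mechanical variants of those above.
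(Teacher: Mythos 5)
Your proposal is correct and follows essentially the same route as the paper: structural induction on the $\ATSproof$ typing derivation, with Lemma~\ref{lemma:ATSproof_prop2bool} (Constraint Internalization) invoked exactly where a proof premise must be discharged as a boolean constraint---your $\mbox{\bf(\tyrule-tup-pt)}$ and $\mbox{\bf(\tyrule-app-tp)}$ cases coincide with the paper's displayed cases, and your observation that $\mbox{\bf(\tyrule-lam-pt)}$ and $\mbox{\bf(\tyrule-$\tand$-elim-pt)}$ go through because $\pferasep{\cdot}$ turns prop bindings into boolean hypotheses is precisely how the paper handles them. The extra material you supply (the constant-application case with prop arguments, and the exchange of $\pferase{B}$ past $\pferasep{\Delta}$ in the inherited rules---an exchange property that, strictly speaking, is an implicit assumption on the abstract constraint relation rather than a consequence of the listed regularity rules, and one the paper's own proof silently needs as well) is detail the paper compresses into ``the rest of the cases can be handled similarly,'' not a different approach.
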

\begin%
{proof}
By structural induction on the typing derivation $\Der$ of $\Sigma;\vB;\Delta\tpjg e:T$.
\begin%
{itemize}
\item
Assume that the last applied rule in $\Der$ is $\mbox{\bf(\tyrule-$\tand$-elim-pt)}$:
\[%
\begin%
{array}{c}
\infer[\mbox{\bf(\tyrule-$\tand$-elim-pt)}]
      {\Sigma;\vB;\Delta\tpjg \dletin{\dtuplept{x_1,x_2}=\ept}{e_0}:T_0}
      {\Der_1::\Sigma;\vB;\Delta\tpjg \ept: P_1 * T_2 & \Der_2::\Sigma;\vB;\Delta,x_1:P_1,x_2:T_2 \tpjg e_0:T_0}
\end{array}\]
where $e$ is $\dletin{\dtuplept{x_1,x_2}=\ept}{e_0}$ and $T=T_0$.
By induction hypothesis on $\Der_1$, there exists the following derivation in $\ATSzero$:
$$\Der'_1::\pferase{\Sigma};\pferase{\vB},\pferasep{\Delta};\pferaset{\Delta}\tpjg\pferase{\ept}:\pferase{P_1}\Band\pferase{T_2}$$
By induction hypothesis on $\Der_2$, there exists the following derivation in $\ATSzero$:
$$\Der'_2::\pferase{\Sigma};\pferase{\vB},\pferasep{\Delta},\pferase{P_1};\pferaset{\Delta},x_2:\pferase{T_2}\tpjg\pferase{e_0}:\pferase{T_0}$$
Applying the rule $\mbox{\bf(\tyrule-$\Band$-elim)}$ to $\Der'_1$ and $\Der'_2$ yields the following derivation:
$$\Der'::\pferase{\Sigma};\pferase{\vB},\pferasep{\Delta};\pferaset{\Delta}\tpjg\dsletin{\dassert{x_2}=\pferase{\ept}}{\pferase{e_0}}:\pferase{T_0}$$ Note that $\pferase{e}$
equals $\dsletin{\dassert{x_2}=\pferase{\ept}}{\pferase{e_0}}$,
and the case concludes.
\item
Assume that the last applied rule in $\Der$ is $\mbox{\bf(\tyrule-app-tp)}$:
\[%
\begin%
{array}{c}
\infer[\mbox{\bf(\tyrule-app-tp)}]
      {\Sigma; \vB; \Delta\tpjg \dapptp{e_1}{e_2}: T_2}
      {\Der_1::\Sigma; \vB; \Delta\tpjg e_1: P_1\timp T_2 & \Der_2::\Sigma; \vB; \Delta\tpjg e_2: P_1}
\end{array}\]
where $e$ is $\dapptp{e_1}{e_2}$ and $T=T_2$.
By induction hypothesis on $\Der_1$, there exists the following derivation in $\ATSzero$:
$$\Der'_1::\pferase{\Sigma};\pferase{\vB},\pferasep{\Delta};\pferaset{\Delta}\tpjg\pferase{e_1}:\pferase{P_1}\Bimp\pferase{T_2}$$
Applying Lemma~\ref{lemma:ATSproof_prop2bool} to $\Der_2$ yields that
the constraint
$\pferase{\Sigma};\pferase{\vB},\pferasep{\Delta};\pferaset{\Delta}\tpjg\pferase{P_1}$ is
valid.  Applying the rule $\mbox{\bf(\tyrule-$\Bimp$-elim)}$ to
$\Der'_1$ and the valid constraint yields the following derivation:
$$\pferase{\Sigma};\pferase{\vB},\pferasep{\Delta};\pferaset{\Delta}\tpjg\deguard{\pferase{e_1}}:\pferase{T_2}$$
Note that $\pferase{e}$ equals $\deguard{\pferase{e_1}}$, and the case
concludes.
\end{itemize}
The rest of the cases can be handled similarly.
\hfill
\end{proof}

By Theorem~\ref{theorem:ATSproof_pferasure}, the proof-erasure
of a program is well-typed in $\ATSzero$ if the program itself is
well-typed in $\ATSproof$. In other words,
Theorem~\ref{theorem:ATSproof_pferasure} justifies PwTP in $\ATSproof$
as an approach to internalizing constraint-solving through explicit
proof-construction.

\section{Related Work and Conclusion}\label{section:related}
Constructive type theory, which was originally proposed by
Martin-L{\"o}f for the purpose of establishing a foundation for
mathematics, requires pure reasoning on programs. Generalizing as well
as extending Martin-L{\"o}f's work, the framework Pure Type System
($\PTS$) offers a simple and general approach to designing and
formalizing type systems. However, type equality depends on program
equality in the presence of dependent types, making it highly
challenging to accommodate effectful programming features as these
features often greatly complicate the definition of program
equality~~\cite{CONSTABLE87,MENDLER87,HMST95,HN88}.

The framework {\em Applied Type System} ($\ATS$)~\cite{ATS-types03}
introduces a complete separation between statics, where types are
formed and reasoned about, and dynamics, where programs are
constructed and evaluated, thus eliminating by design the need for
pure reasoning on programs in the presence of dependent types. The
development of $\ATS$ primarily unifies and also extends the previous
studies on both Dependent ML (DML)~\cite{XP99,DML-jfp07} and guarded
recursive datatypes~\cite{GRDT-popl03}. DML enriches ML with a
restricted form of dependent datatypes, allowing for specification and
inference of significantly more precise type information (when
compared to ML), and guarded recursive datatypes can be thought of as
an impredicative form of dependent types in which type indexes are
themselves types. Given the similarity between these two forms of
types, it is only natural to seek a unified presentation for them.
Indeed, both DML-style dependent types and guarded recursive datatypes
are accommodated in $\ATS$.

In terms of theorem-proving, there is a fundamental difference between
$\ATS$ and various theorem-proving systems such as
NuPrl~\cite{CONSTABLE86} (based on Martin-L{\"o}f's constructive type
theory) and Coq~\cite{Dowek93tr} (based on the calculus of
construction~\cite{COQUAND88A}). In $\ATS$, proof construction is
solely meant for constraint simplification and proofs are not expected
to contain any computational meaning.  On the other hand, proofs in
NuPrl and Coq are required to be constructive as they are meant for
supporting program extraction.

The theme of combining programming with theorem-proving is also
present in the programming language
$\Omega$emga~\cite{LanguagesOfTheFuture}.  The type system of
$\Omega$emga is largely built on top of a notion called {\em equality
  constrained types} (a.k.a. phantom types~\cite{PhantomTypes}), which
are closely related to the notion of guarded recursive
datatypes~\cite{GRDT-popl03}. In $\Omega$emga, there seems no strict
separation between programs and proofs. In particular, proofs need to
be constructed at run-time. In addition, an approach to simulating
dependent types through the use of type classes in Haskell is given
in~\cite{FakingIt}, which is casually related to proof construction in
the design of $\ATS$. Please also see~\cite{CUTELIMpadl04} for a
critique on the practicality of simulating dependent types in Haskell.

In summary, a framework $\ATS$ is presented in this paper to
facilitate the design and formalization of type systems to support
practical programming.  With a complete separation between statics and
dynamics, $\ATS$ removes by design the need for pure reasoning on
programs in the presence of dependent types. Additionally, $\ATS$
allows programming and theorem-proving to be combined in a
syntactically intertwined manner, providing the programmer with an
approach to internalizing constraint-solving through explicit
proof-construction. As a minimalist formulation of $\ATS$,
$\ATSzero$ is first presented and its type-soundness formally
established. Subsequently, $\ATSzero$ is extended to $\ATSproof$ so as
to support programming with theorem-proving, and the correctness of
this extension is proven based on a translation often referred to as
proof-erasure, which turns each well-typed program in $\ATSproof$ into
a corresponding well-typed program in $\ATSzero$ of the same dynamic
semantics.

\bibliographystyle{jfp}\bibliography{mybib}

\end{document}